\theoremstyle{definition}
\newtheorem*{example*}{Example}
\newtheorem{theorem}{Theorem}
\newtheorem{lemma}{Lemma}
\newcommand{\custompar}[1]{\parskip 0pt \textbf{\textit{#1}}}
\newcommand{\Implies}{\Rightarrow}
\renewcommand{\And}{\wedge}
\newcommand{\Subt}{<:}
\newcommand{\Consi}{\mathrel{\mathrm{\wedge:}}}
\newcommand{\HasT}{\;\mathrel{\mathrm{::}}\;}
\newcommand{\App}[2]{{#1}\ {#2}}
\newcommand{\env}{\Gamma}
\newcommand{\produce}{\;\uparrow\;}
\newcommand{\consume}{\;\downarrow\;}
\newcommand{\produceB}{\;{\mathrm{\uparrow_b}}\;}
\newcommand{\consumeB}{\;{\mathrm{\downarrow_b}}\;}
\newcommand{\funT}[3]{{#1}\colon {#2} \to {#3}}
\newcommand{\before}{\prec}
\newcommand{\ttop}{\mathsf{top}}
\newcommand{\tbot}{\mathsf{bot}}
\newcommand{\refine}{\mathsf{Refine}}
\newcommand{\fresh}{\mathsf{Fresh}}
\newcommand{\solve}{\mathsf{Solve}}
\newcommand{\horn}{\mathsf{Horn}}
\newcommand{\strengthen}{\mathsf{Strengthen}}
\newcommand{\dom}{\mathsf{dom}}
\newcommand{\fv}{\mathsf{FV}}
\newcommand{\pathc}[1]{\mathsf{P}({#1})}
\newcommand{\bnd}[2]{\mathsf{B}_{#1}({#2})}
\newcommand{\contT}[2]{\T{let}\ {#1}\ \T{in}\ {#2}}
\newcommand{\quals}{\mathbb{Q}}
\newcommand{\entailsQ}{\vdash_{\quals}}
\newcommand{\constraints}{\mathcal{C}}
\newcommand{\horns}{\mathcal{H}}
\newcommand{\typeass}{\mathcal{T}}
\newcommand{\liquidass}{\mathcal{L}}
\newcommand{\valuation}[2]{\llbracket{#1}\rrbracket_{#2}}
\newcommand{\ie}{\textit{i.e.}\@\xspace}
\newcommand{\lang}{\textsc{Synquid}\xspace}
\newcommand{\tool}{\textsc{Synquid}\xspace}
\newcommand{\exCount}{64\xspace}
\newcommand{\head}[1]{\emph{#1}}
\protected\def\ccell#1#{%
  \kern-\fboxsep
  \@ccell{#1}%
}
\def\@ccell#1#2#3{%
  \colorbox#1{#2}{#3}%
  \kern-\fboxsep
}
\newif\iflong
\begin{document}

\setlength{\pdfpageheight}{\paperheight}
\setlength{\pdfpagewidth}{\paperwidth}

\title{Program Synthesis from Polymorphic Refinement Types}
\iflong
\subtitle{Extended Version}
\fi

\authorinfo{Nadia Polikarpova\and Ivan Kuraj\and Armando Solar-Lezama}
           {MIT CSAIL, USA}
           {\{polikarn,ivanko,asolar\}@csail.mit.edu}

\maketitle

\begin{abstract}
We present a method for synthesizing recursive functions that provably satisfy a given specification in the form of a polymorphic refinement type.
We observe that such specifications are particularly suitable for program synthesis for two reasons.
First, they offer a unique combination of expressive power and decidability,
which enables automatic verification---and hence synthesis---of nontrivial programs.
Second, a type-based specification for a program can often be effectively decomposed into independent specifications for its components,
causing the synthesizer to consider fewer component combinations
and leading to a combinatorial reduction in the size of the search space.
At the core of our synthesis procedure is a new algorithm for refinement type checking,
which supports specification decomposition. 

We have evaluated our prototype implementation on a large set of synthesis problems and found 
that it exceeds the state of the art in terms of both scalability and usability. 
The tool was able to synthesize more complex programs than those reported in prior work 
(several sorting algorithms and operations on balanced search trees),
as well as most of the benchmarks tackled by existing synthesizers, 
often starting from a more concise and intuitive user input. 
\end{abstract}



\keywords
Program Synthesis, Functional Programming, Refinement Types, Predicate Abstraction

\section{Introduction}\label{sec:intro}

The key to scalable program synthesis is modular verification.
Modularity enables the synthesizer to prune candidates for different subprograms independently,
whereby combinatorially reducing the size of the search space it has to consider.
This explains the recent success of \emph{type-directed} approaches to synthesis of functional programs~\cite{OseraZd15,FeserChDi15,FrankleOWZ16,GveroKuKuPi13}:
not only do ill-typed programs vastly outnumber well-typed ones,
but more importantly, a type error can be detected long before the whole program is put together.

Simple, coarse-grained types alone are, however, rarely sufficient to precisely describe a synthesis goal.
Therefore, existing approaches supplement type information with other kinds of specifications,
such as input-output examples~\cite{AlbarghouthiGuKi13,OseraZd15,FeserChDi15},
or pre- and post-conditions~\cite{LeinoMi12,KneussKuKuSu13}.
Alas, the corresponding verification procedures rarely enjoy the same level of modularity as type checking,
thus fundamentally limiting the scalability of these techniques.

In this work we present a novel system that pushes the idea of type-directed synthesis one step further by taking advantage of \emph{refinement types}~\cite{Flanagan06,RondonKaJh08}: 
types decorated with predicates from a decidable logic.
For example, imagine that a user intends to synthesize the function \T{replicate},
which, given a natural number $n$ and a value $x$, produces a list that contains $n$ copies of $x$.
In our system, the user can express this intent by providing the following signature:
$$
\T{replicate}\HasT n:\T{Nat}\ \to x:\alpha \to \{\nu\colon\T{List}\ \alpha \mid \T{len}\ \nu = n\}
$$
Here, the return type is refined with the predicate $\T{len}\ \nu = n$,
which restricts the length of the output list to be equal to the argument $n$;
\T{Nat} is a shortcut for $\{\nu\colon\T{Int}\mid\nu \geq 0\}$, the type of integers that are greater or equal to zero%
\footnote{Hereafter the bound variable of the refinement is always called $\nu$ and the binding is omitted.}.
Given this signature,
together with the definition of \T{List} and a standard set of integer components (which include zero, decrement function, and inequalities),
our system produces a provably correct implementation of \T{replicate}, shown in \autoref{fig:replicate}, within fractions of a second.

\begin{figure}
\begin{nanoml}
replicate :: n:Nat -> x:_a -> {List _a |len _v == n}
replicate = \n.\x.if n <= 0
  then Nil
  else Cons x (replicate (dec n) x)
\end{nanoml}
\caption{Refinement type signature of \T{replicate} and the code synthesized from this signature.}\label{fig:replicate}
\end{figure}

We argue that refinement types offer the user a convenient interface to a program synthesizer:
the signature above is only marginally more complex than a conventional ML or Haskell type.
Contrast that with example-based synthesis,
which would require a conventional type together with multiple input-output pairs,
and in return provide much weaker correctness guarantees.

The \T{replicate} example is a perfect illustration of the power of parametric polymorphism for specifying program behavior.
Even though the signature in \autoref{fig:replicate} never says explicitly that each element of the output list must equal $x$,
it nevertheless captures the semantics of \T{replicate} completely:
since the function knows nothing about the type parameter $\alpha$,
it has no way of constructing any values of this type other than $x$.
This surprising expressiveness of polymorphic types had been long known~\cite{Wadler89},
but combined with refinements, it enables full-fledged higher-order reasoning within the type system:
a caller of \T{replicate} can instantiate $\alpha$ with any refinement type,
obtaining the fact that whenever $x$ has a certain property, every element of the output list shares that same property.

Perhaps surprisingly, prior work on \emph{liquid types}~\cite{RondonKaJh08,KawaguchiRoJh09,VazouRoJh13,VazouSeJh14} has shown that
this type of higher-order reasoning can be fully automated for a large class of programs and properties.
The \emph{liquid type inference} algorithm~\cite{RondonKaJh08}
uses a combination of Hindley-Milner unification and least-fixpoint Horn clause solver based on predicate abstraction
to discover refined instantiations for polymorphic types 
and ultimately reduce verification to proving quantifier-free formulas over simple refinement predicates,
efficiently decidable by SMT solvers.
The unique combination of expressive power and decidability offered by polymorphic refinement types
makes them ideal for program synthesis.

\custompar{Technical Challenges.}
Unfortunately, liquid type inference cannot be applied out of the box to the context of synthesis.
Designed for the setting where, given a program, the goal is to construct its type,
the inference algorithm starts from the leaves of the program, whose types are known,
and propagates type information bottom-up, 
constructing types of terms from the types of their subterms.
In program synthesis, however, the setting is different:
here the top-level type is given, and the goal is to construct the program.
One way to do so is to exhaustively explore program candidates,
performing liquid type inference on each one,
and then checking if the inferred type matches the given specification.
While this approach does rule out many ill-typed partial programs,
it fails to take advantage of the specification for guiding the search.
A more promising approach would propagate type information \emph{top-down} from the specification, 
using it to filter out irrelevant partial solutions.

Some program terms naturally support decomposing a specification into independent requirements for their subterms.
For example, 
given a goal type $T$ and assuming that the top-level construct of the program is a conditional with a known guard,
we can pass $T$ on to the two branches of the conditional, together with the appropriate path conditions derived from the guard,
and proceed to check (or synthesize) them completely independently.
Unfortunately, for other program terms there might be infinitely many ways to precisely decompose a specification.
Take a function application, $\App{f}{x}$:
the specification $\App{f}{x}\HasT\T{Nat}$
can be satisfied by requiring that $f$ subtract one and $x$ be positive,
or that $f$ add one, and $x$ be greater than negative one, and so on.
The challenge in this case is to find an over-approximate decomposition:
that is, construct requirements on $f$ and $x$ that are necessary but generally not sufficient for the correctness of $f x$,
yet are strong enough to filter out many incorrect subterms.

To address this challenge, we propose a new type checking mechanism for refinement types, 
which we dub \emph{local liquid type checking}.
At the heart of the new mechanism is a type system
inspired by bidirectional type checking~\cite{PierceTu00}.
Bidirectional systems interleave top-down and bottom-up propagation of type information 
depending on the syntactic structure of the program;
in this work we extend the bottom-up phase of bidirectional checking with top-down propagation of over-approximate type information,
resulting in a \emph{round-trip type checking} mechanism, which promotes modular checking of function applications. 
Additionally, we equip the type system with a novel \emph{liquid abduction} rule,
which enables modular checking of branching terms.

Refinement type checking involves
solving subtyping constraints over unknown refinement types.
The modularity requirement precludes our system from using the two techniques employed to this end by liquid type inference---%
Hindley-Milner unification and the least-fixpoint Horn solver---%
since both techniques are designed to work on complete programs and propagate type information bottom-up.
Instead, local liquid type checking incorporates an algorithm for solving subtyping constraints incrementally, as it analyzes different parts of the program.
Most notably, top-down propagation requires finding the \emph{greatest fixpoint} solution to unknown refinements instead of the least,
which is known to be fundamentally more expensive;
we propose a practical implementation for this fixpoint computation, 
which we call \textsc{MUSFix},
inspired by an existing algorithm for enumerating minimal unsatisfiable subsets (MUSes)~\cite{LiffitonPrMaMa15}.

\custompar{Results.}
We have combined local liquid type checking and exhaustive enumeration of program terms
in a prototype program synthesizer called \tool 
(for ``\textsc{Syn}thesis with li\textsc{quid} types''),
which we evaluated on \exCount synthesis problems from a variety of sources.
The implementation, the benchmarks, 
and a web interface for \tool are available from the tool repository~\cite{SynquidRepo}.

Our evaluation indicates that the techniques described above work well
for synthesizing programs that manipulate lists and trees, 
as well as data structures with complex invariants
and custom user-defined data structures.
\tool was able to synthesize programs that are more complex than those previously reported in the literature,
including five different sorting algorithms, and manipulations of binary search trees, AVL trees, and Red-Black trees.
The evaluation also shows that the modularity features of local liquid type checking
and the \textsc{MUSFix} solver are crucial for the performance of the system.

We compare our system with the existing synthesizers 
based on input-output examples~\cite{AlbarghouthiGuKi13,OseraZd15,FeserChDi15,FrankleOWZ16},
and Hoare-style verification~\cite{LeinoMi12,KneussKuKuSu13},
and demonstrate that \tool can handle the majority of its competitors' most challenging benchmarks,
taking a similar or shorter amount of time.
In addition, compared with the example-based tools,
\tool's specifications are usually more concise and the generated solutions are provably correct;
compared with the tools based on Hoare-style reasoning, 
\tool can verify (and thus synthesize) more complex programs thanks to automatic refinement inference.

\section{Overview}\label{sec:overview}

\tool operates within a core ML-like language featuring conditionals, algebraic datatypes, pattern matching, parametric polymorphism, and fixpoints.
We equip the language with \emph{general decidable} refinement types,
closely following the liquid types framework~\cite{RondonKaJh08,KawaguchiRoJh09,VazouRoJh13}.
The type system includes refined base types of the form $\{B\mid\psi\}$,
where $\psi$ is a \emph{refinement predicate} over the program variables and a special \emph{value variable} $\nu$, which does not appear in the program.
Base types can be combined into dependent function types of the form $\funT{x}{T_x}{T_2}$,
where $x$ may appear in the refinement predicates of $T_2$.
Our framework is agnostic to the exact logic of refinement predicates as long as validity of their boolean combinations is decidable;
our prototype implementation uses the quantifier-free logic of arrays\footnote{Arrays are used to model sets.}, uninterpreted functions, and linear integer arithmetic,
which is sufficient for all the examples and benchmarks in this paper.

A \emph{synthesis problem} is defined by
\begin{inparaenum}[(1)] 
\item a goal refinement type $T$
\item a \emph{typing environment} $\env$ and
\item a set of \emph{logical qualifiers} $\quals$.
\end{inparaenum}
A solution to the synthesis problem is a program term $t$ that has the type $T$ in the environment $\env$.
The environment contains type signatures of components available to the synthesizer
(which may include datatype constructors, ``library'' functions, and local variables)
as well as any path conditions that can be assumed when synthesizing $t$.
Qualifiers are predicates from the refinement logic used as building blocks for unknown refinements and branch guards.
Our system extracts an initial set of such predicates automatically from the goal type and the types of components; 
for all our experiments, the automatically extracted qualifiers were sufficient to synthesize all the necessary refinements, 
but in general the user might have to provide additional predicates. 

Given a synthesis problem,
\lang constructs a candidate solution,
by either by either picking a component in $\env$
or decomposing the problem into simpler subproblems and recursively obtaining a solution $t_i$ to each one.
Since the decomposition is generally incomplete,
a candidate obtained by combining $t_i$'s is not guaranteed to have the desired type $T$;
to check if the candidate is indeed a solution,
the system generates a \emph{subtyping constraint}.
If the constraint cannot be satisfied, the system backtracks to pick a different combination of solutions to subproblems
(or a different decomposition altogether);
the stronger the sub-goals produced during decomposition, the less \lang has to backtrack.
The rest of the section illustrates the details of this procedure 
and showcases various features of the specification language on a number of examples.

\custompar{Example 1: Recursive Programs and Condition Abduction.}
We first revisit the \T{replicate} example from the introduction.
We assume that the set of available components includes functions \T{0}, \T{inc} and \T{dec} on integers,
as well as a list datatype whose constructors are refined with length information,
expressed by means of an uninterpreted function (or \emph{measure}) \T{len}.
\begin{nanoml}
0::{Int|_v = 0}
inc::x:Int -> {Int|_v = x + 1}
dec::x:Int -> {Int|_v = x - 1}

termination measure len :: List _b -> Nat
data List _b where 
  Nil :: {List _b | len _v == 0}
  Cons :: _b -> xs: List _b -> 
    {List _b | len _v == len xs + 1}
\end{nanoml}
Measure \T{len} also serves as the \emph{termination metric} on lists (denoted with the \T{termination} keyword above):
it maps lists to a type that has a predefined well-founded order in our language and thus enables termination checks for recursion on lists.

For the rest of the section, let us fix the set of logical qualifiers $\quals$ to $\{\star\leq \star,\star\neq \star\}$, 
where $\star$ is a placeholder that can be instantiated with any program variable or $\nu$.

Given the specification 
$$
n:\T{Nat} \to x:\alpha \to \{\T{List}\ \alpha\mid \T{len}\ \nu = n\}
$$
\tool picks $\lambda$-abstraction as the top-level construct,
and creates a synthesis subproblem for its body with a simpler goal type $\{\T{List}\ \alpha\mid \T{len}\ \nu = n\}$.
The system need not consider other choices of the top-level construct,
since every terminating program has an equivalent $\beta$-normal $\eta$-long form,
where all functions are fully applied and the head of each application is a variable;
moreover, the above decomposition is precise, since any solution to the subproblem will satisfy the top-level goal.

As part of the decomposition, the arguments $n:\T{Nat}$ and $x:\alpha$ are added to the environment, 
together with the function \T{replicate} itself, to account for the possibility that it may be recursive. 
In order to ensure termination of recursive calls,
the system weakens the type of \T{replicate} in the environment to
$$
n'\colon\{\T{Int}\mid 0 \leq \nu < n\} \to x'\colon\alpha\to \{\T{List}\ \alpha\mid \T{len}\ \nu = n'\}
$$
demanding that the first argument be strictly decreasing.
\lang picks $n$ as the termination metric in this case,
since it is the only argument whose type has an associated well-founded order. 

In the body of the function, 
the top-level construct might be a branching term.
Rather than exploring this possibility explicitly,
\tool adds a fresh \emph{predicate unknown} $P_0$ as a path condition to the environment,
and then searches for a branch-free program term that satisfies the specification assuming $P_0$.
Each candidate branch-free term $t$ is validated by solving a subtyping constraint;
as part of this process, \tool discovers the weakest $P_0$ that makes $t$ satisfy the specification.
In case $t$ is valid unconditionally, the weakest such $P_0$ is \T{True}, and no branch is generated.

Suppose the first branch-free term that \tool considers is \T{Nil};
this choice results in a subtyping constraint
\begin{multline*}
n:\T{Nat}; x:\alpha; P_0 \vdash\\ \{\T{List}\ \beta'\mid \T{len}\ \nu = 0\} \Subt \{\T{List}\ \alpha\mid \T{len}\ \nu = n\}
\end{multline*}
where $\beta'$ is a free type variable.
The constraint imposes two requirements:
\begin{inparaenum}[(i)] 
\item the \emph{shapes} of the two types (i.e. their underlying unrefined types) must have a unifier~\cite{Pierce02} and 
\item the refinements of the subtype must subsume those of the supertype under the assumptions encoded in the environment.
\end{inparaenum}
The constraint above gives rise to a unifier $[\beta'\mapsto\{\alpha\mid P_1\}]$ (where $P_1$ is a fresh predicate unknown)
and two Horn constraints: $P_0 \And P_1 \Implies \top$ and $0 \leq n \And P_0 \And \T{len _v} = 0 \Implies \T{len _v} = n$.
\tool uses the \textsc{MUSFix} Horn solver (\autoref{sec:theory:horn}) to find the weakest assignment of \emph{liquid formulas} to $P_0$ and $P_1$
that satisfies both Horn constraints.
A liquid formula is a conjunction of atomic formulas,
obtained by replacing $\star$-placeholders in each qualifier in $\quals$ with appropriate variables.
If for some $P_i$ no valid assignment exists,
or the weakest valid assignment is a contradiction,
the candidate program is discarded.

In our example, \textsc{MUSFix} discovers the weakest assignment $\liquidass = [P_0 \mapsto n \leq 0, P_1 \mapsto \top]$,
effectively \emph{abducing} the necessary branching condition.
Since the condition is not trivially true,
the system proceeds to synthesize the remaining branch under the path condition $\lnot(n \leq 0)$.
A similar strategy for generating branching programs has been successfully employed in several existing synthesis tools~\cite{LeinoMi12,KneussKuKuSu13,AlbarghouthiGuKi13,AlurCR15}
and is commonly referred to as \emph{condition abduction}.
Each condition abduction technique faces the challenge of searching a large space of potential conditions efficiently;
our approach, which we dub \emph{liquid abduction}, 
addressed this challenge by restricting conditions to liquid formulas
and using \textsc{MUSFix} to explore the space of liquid formulas efficiently.

The remaining branch has to deal with the harder case of $n > 0$.
When enumerating candidates for this branch,
\tool eventually decides to apply the \T{replicate} component (that is, make a recursive call), 
and searches for the parameters via recursive application of the synthesis procedure. 
At this point, the strong precondition on the argument $m$, $0 \leq \nu < n$, which arises from the termination requirement,
enables filtering candidate arguments locally, before synthesizing the rest of the branch.
In particular, the system will discard the candidates $n$ and $\T{inc}\ n$ right away,
since they fail to produce a value strictly less than $n$.

\custompar{Example 2: Complex Data Structures and Invariant Inference.}
Assuming comparison operators in our logic are generic, 
we can define the type of binary search trees as follows:
\begin{nanoml}
termination measure size :: BST _a -> Int
measure keys :: BST _a -> Set _a
data BST _a where
  Empty::{BST _a|keys _v == []}
  Node::x:_a -> l:BST{_a|_v < x} -> r:BST{_a|x < _v} 
    -> {BST _a|keys _v == keys l + keys r + [x]}
\end{nanoml}
According to this definition, one can obtain a \T{BST} either by taking an empty tree,
or by composing a node with key $x$ and two \T{BST}s, $l$ and $r$,
in which all keys are, respectively, strictly less and strictly greater than $x$.
The type is additionally refined by the measure \T{keys}, which denotes the set of all keys in the tree,
and a termination measure \T{size} (size-related refinements are omitted in the interest of space).

The following type specifies insertion into a \T{BST}:
\begin{nanoml}
insert::x:_a -> t:BST _a -> 
                  {BST _a|keys _v == keys t + [x]}
\end{nanoml}
From this specification, \tool generates the following implementation within two seconds:
\begin{nanoml}
insert = \x.\t.match t with
  | Empty -> Node x Empty Empty
  | Node y l r -> if x <= y && y <= x
      then t
      else if y <= x          
        then Node y l (insert x r)
        else Node y (insert x l) r
\end{nanoml}

Pattern matching in this example is synthesized using a special case of liquid abduction:
type-checking the term \T{Node x Empty Empty} against the goal type $\{\T{BST}\ \alpha\mid\T{keys}\ \nu = \T{keys}\ t + [x]\}$,
causes the system to abduce the condition $\T{keys}\ t = []$,
which implies a match on $t$.

The challenging aspect of this example is reasoning about sortedness.
For example, for the term $\T{Node}\ y\ l\ (\T{insert}\ x\ r)$ to be type-correct,
the recursive call must return the type $\T{BST}\ \{\alpha\mid y < \nu\}$.
This type does not appear explicitly in the user-provided signature for \T{insert};
in fact, verifying this program requires discovering a nontrivial inductive invariant of \T{insert}
(that adding a key greater than some value $z$ into a tree with keys greater than $z$ again produces a tree with keys greater than $z$),
which puts this and similar examples beyond reach of existing synthesizers based on Hoare-style reasoning~\cite{LeinoMi12,KneussKuKuSu13}.

In our framework, this property is easily inferred by the Horn constraint solver in combination with polymorphic recursion.
When \T{insert} is added to the environment, 
its type is generalized to $\forall \beta . x\colon\beta \to u\colon \{\T{BST}\ \beta\mid \T{size}\ u < \T{size}\ t \} \to \{\T{BST}\ \beta\mid\T{keys}\ \nu = \T{keys}\ t + \{x\}\}$.
At the site of the recursive call, the precondition of $\T{Node}\ y\ l$ imposes a constraint that simplifies to $\T{BST}\ \beta\Subt \T{BST}\ \{\alpha\mid\ y < \nu\}$,
which leads to instantiating $[\beta\mapsto\{\alpha\mid P_0\}]$ and $[P_0\mapsto y < \nu]$.

Importantly, due to \emph{round-trip type checking} (\autoref{sec:theory:rules}), 
this assignment is discovered before the two arguments to $\T{insert}$ are synthesized,
which has the effect of propagating the requirement imposed by $\T{Node}$ \emph{top-down} through the application of $\T{insert}$ onto its arguments.
In particular, using the goal type $\{\alpha\mid\ y < \nu\}$ for the first argument of \T{insert},
the system can immediately discard the candidate \T{y},
while trying \T{x} succeeds and leads to the abduction of the branch condition $y \leq x$.
As our evaluation shows, disabling this type of early filtering increases the synthesis time for this example from less than two seconds to over two minutes.

\custompar{Example 3: Abstract Refinements.}
Using refinement types as an interface to synthesis raises the question of their expressiveness.
Restricting refinements to decidable logics fundamentally limits the class of programs they can fully specify,
and for other programs writing a refinement type might be possible but cumbersome compared to providing a set of input-output examples or a specification in a richer language.
The previous examples suggest that refinement types are effective for specifying programs that manipulate data structures
with nontrivial universal and inductive invariants.
In this example we demonstrate how extending the type system with \emph{abstract refinements} allows us to express a wider class of properties,
for example, talk about the order of list elements.

Abstract refinements, proposed in~\cite{VazouRoJh13}, enable explicit quantification over refinements of datatypes and function types.
For example, a list datatype can be parameterized by a binary relation $r$
that must hold between any in-order pair of elements in the list:
\begin{nanoml}
data RList _a <|r::_a -> _a -> Bool|> where
 Nil::RList _a <|r|>
 Cons::x:_a -> RList {_a|r x _v} <|r|> -> RList _a <|r|>
\end{nanoml}

On the one hand this enables concise definitions of lists with various inductive properties as instantiations of \T{RList}:
\begin{nanoml}
IList _a = RList _a <|\x\y.x <= y|> -- Increasing list
UList _a = RList _a <|\x\y.x != y|> -- Unique list
List _a = RList _a <|\x\y.True|> -- Unrestricted list
\end{nanoml}

On the other hand, making list-manipulating functions polymorphic in this relation, provides an elegant way to specify order-related properties.
Consider the following type for list reversal: 
\begin{nanoml}
reverse::<|r::_a -> _a -> Bool|>.xs:RList _a <|r|> -> 
  {RList _a <|\x\y.r y x|>|len _v == len xs}
\end{nanoml}
It says that whatever relation holds between every \emph{in-order} pair of elements of the input list,
also has to hold between every \emph{out-of-order} pair of elements of the output list.
This type does not restrict the applicability of \T{reverse},
since at the call site $r$ can always be instantiated with \T{True};
the implementation of \T{reverse}, however, has to be correct to any value of $r$,
which leaves the synthesizer no choice but reverse the order of list elements.
Given the above specification and a component that appends an element to the end of the list (specified in a similar fashion), 
\tool synthesizes the standard implementation of list reversal.

\custompar{Example 4: Higher-Order Combinators and Auxiliary Function Discovery.}
Complex programs might require recursive auxiliary functions.
Discovering specifications for such functions automatically is a difficult task,
akin to \emph{lemma discovery} in theorem proving~\cite{Montano-RivasMDB12,ClaessenJRS12,HerasKJM13},
which largely remains an open problem. 
\tool expects users to provide the high-level insight about a complex algorithm in the form of auxiliary function signatures.
For example, if the goal is to synthesize a list sorting function with the following signature
\begin{nanoml}
sort::xs:List _a -> {IList _a|elems _v == elems xs}
\end{nanoml}
(where \T{elems} denotes the set of list elements),
the user can express the insight behind different sorting algorithms by providing different auxiliary functions:
insertion into a sorted list for \emph{insertion sort},
splitting and merging for \emph{merge sort},
or partitioning and concatenation for \emph{quick sort}.
Naturally, the implementation of the auxiliary functions can in turn be synthesized,
but coming up with their specification is the creative step that generally requires user interaction,
and can be considered a major hurdle on the path to fully automatic synthesis.

It turns out, however, that replacing general recursion with higher-order combinators such as \T{map} and \T{fold}%
---a style widely used and highly encouraged in functional programming---%
makes it possible to infer requirements on the auxiliary function from the specification of the main program.
This is one of the main insights behind the synthesizer \textsc{$\lambda^2$}~\cite{FeserChDi15},
which relies on hard-coded rules for propagating input-output examples top-down through common combinators.
\tool supports this top-down propagation of specifications out of the box 
thanks to the combination of refinement types and polymorphism. 

Consider the following type for function \T{foldr}, 
which folds a binary operation \T{f} over a list \T{ys} from right to left:
\begin{nanoml}
foldr::<p::List _b -> _c -> Bool>.
  f:(t:List _b -> h:_b -> acc:{_c|p t _v} -> 
        {_c|p (Cons h t) _v}) ->
  seed:{_c|p Nil _v} ->
  ys:List _b -> {_c|p ys _v}
\end{nanoml}
The shape of this type is slightly different from the usual signature of \T{foldr}:
the operation \T{f} takes an extra ghost argument \T{t}, 
which
denotes the part of the list that has already been folded%
\footnote{Extending \tool with \emph{bounded refinement types}~\cite{VazouBJ15} would enable a more natural specification without the ghost argument.}.
The type of is parametrized by a binary relation \T{p} that \T{foldr} establishes between the input list \T{ys} and the output;
it requires that the relationship hold between the empty list and the \T{seed},
and that applying \T{f} to a head element \T{h} and the result of folding a tail list \T{t} yield a result that satisfies the relationship with \T{Cons h t}
(in other words, \T{p} plays the role of a loop invariant).
Note that folding a list left-to-right (\T{foldl}) requires a more complex specification
that cannot currently be expressed within the \tool type system.

What happens if we ask \tool to synthesize \T{sort}, while providing \T{foldr} as the only component?
When trying out an application of \T{foldr}, 
round-trip type checking handles its higher-order argument, \T{f}, in a special way,
since in our type system, as in~\cite{RondonKaJh08}, 
\T{f} cannot appear in the result type of \T{foldr}.
Consequently, the exact value of \T{f} is not required to determine the type of the application,
which gives \tool the freedom to synthesize it independently from the rest of the program.

The tool quickly figures out that \T{foldr ?? Nil xs} has the required type \T{\{IList _a|elems _v == elems xs\}},
given the following assignment to \T{foldr}'s type and predicate variables: 
$[\beta\mapsto \alpha, \gamma\mapsto \T{IList}\ \alpha, \T{p}\mapsto \lambda as . \lambda bs. \T{elems}\ bs = \T{elems}\ as]$.
Now that \tool comes back to the task of filling in the first argument of \T{foldr},
its required type has been determined entirely as
\begin{nanoml}
t:List _a -> h:_a -> 
  acc:{IList _a|elems _v == elems t} -> 
  {IList _a|elems _v == elems (Cons h t)}
\end{nanoml}
(where \T{elems (Cons h t)} is expanded into \T{[h] + elems t} using the definition of the \T{elems} measure in the type of \T{Cons});
in other words, the auxiliary function must insert \T{h} into a sorted list \T{acc}.
Treating this inferred signature as an independent synthesis goal, 
\tool easily synthesizes a recursive program for insertion into a sorted list,
and thus completes the following implementation of insertion sort without requiring any hints from the user,
apart from a general recursion scheme:
\begin{nanoml}
sort = \xs.foldr f Nil xs
  where f = \t.\h.\acc. 
    match acc with
      Nil -> Cons h Nil
      Cons z zs -> if h <= z
        then Cons h (Cons z zs)
        else Cons z (f zs h zs)  
\end{nanoml}       

The next section gives a formal account of the \lang language and type system,
and details its modular type checking mechanism, which enables scalable synthesis.

\section{The \lang Language}\label{sec:theory}

The central goal of this section is to develop a type checking algorithm for a core programming language with refinement types
that is geared towards candidate validation in the context of synthesis.
This context imposes two important requirements on the type checking mechanism
which are necessary for the synthesis procedure to be automatic and scalable.
The first one has to do with the amount of type inference:
the mechanism can expect top-level type annotations---this is how users specify synthesis goals---%
but cannot rely on any annotations beyond that; 
in particular, the types of all polymorphic instantiations and arguments of anonymous functions must be inferred.
The second requirement is to detect type errors \emph{locally}:
intuitively, if a subterm of a program causes a type error independently of its context,
the algorithm should be able to report that error without analyzing the context.

We build our type checking mechanism as an extension to the the liquid types framework~\cite{RondonKaJh08,KawaguchiRoJh09}, 
which uses a combination of Hidley-Milner unification and a Horn solver to infer refinement types.
The original liquid type inference algorithm is not designed for synthesis, 
and thus makes different trade-offs:
in particular it does not satisfy the locality requirement.
Our type checking mechanism achieves locality based on three key ideas.
First, we apply bidirectional type checking~\cite{PierceTu00} to refinement types
and reinforce it with additional top-down propagation of type information,
arriving at \emph{round-trip type checking} (\autoref{sec:theory:rules});
we then further improve locality of rules for for function applications and branching statements (\autoref{sec:theory:extensions}).
Second, we develop a new algorithm for converting subtyping constraints into horn clauses, 
which is able to do so incrementally as the constraints are issued before analyzing the whole program (\autoref{sec:theory:subtyping}).
Finally, we propose a new, efficient implementation for a greatest-fixpoint Horn solver (\autoref{sec:theory:horn}).
In the interest of space we omit abstract refinements (see~\autoref{sec:overview}) from the formalization;
\cite{VazouRoJh13} has shown that integrating this mechanism into the type system that already supports parametric polymorphism is straightforward.

In \autoref{sec:theory:synthesis} we derive \emph{synthesis rules} from the modular type checking rules;
in doing so we follow previous work on type-directed synthesis~\cite{OseraZd15,InalaQLS15},
which has shown how to turn type checking rules for a language into synthesis rules for the same language.

\subsection{Syntax and Types}\label{sec:theory:syntax}

\begin{figure}
\small
$$
\begin{array}{llll}
\psi &::=                                                                                   & \text{\emph{Refinement term:}}\\
  &\quad   \mid \top \mid \bot \mid 0 \mid + \mid \ldots\ \text{\emph{(varies)}}            & \text{interpreted symbol}\\
  &\quad   \mid x                                                                           & \text{uninterpreted symbol}\\
  &\quad   \mid \App{\psi}{\psi}                                                            & \text{application}\\
\\
\Delta &::=                                                                                 & \text{\emph{Sort:}}\\
  &\quad   \mid \mathbb{B} \mid \mathbb{Z}\mid \ldots\ \text{\emph{(varies)}}               & \text{interpreted}\\ 
  &\quad   \mid \delta                                                                      & \text{uninterpreted}\\
t &::= e \mid b \mid f                                                                      & \text{\emph{Program term}}\\  
e &::=                                                                                      & \text{\emph{E-term:}}\\
  &\quad   \mid x                                                                           & \text{variable}\\
  &\quad   \mid \App{e}{e} \mid \App{e}{f}                                                  & \text{application}\\
b &::=                                                                                      & \text{\emph{Branching term:}}\\
  &\quad   \mid \T{if}\ e\ \T{then}\ t\ \T{else}\ t                                         & \text{conditional}\\
  &\quad   \mid \T{match}\ e\ \T{with}\ |_i\ \T{C}_i \langle x_i^j\rangle \mapsto t_i       & \text{match}\\
f &::=                                                                                      & \text{\emph{Function term:}}\\
  &\quad   \mid \lambda x.t                                                                 & \text{abstraction}\\
  &\quad   \mid \T{fix}\ x.t                                                                & \text{fixpoint}\\  
B &::=                                                                                      & \text{\emph{Base type:}}\\  
  &\quad   \mid \T{Bool}\mid \T{Int}                                                        & \text{primitive}\\   
  &\quad   \mid D\ T_i                                                                      & \text{datatype}\\  
  &\quad   \mid \alpha                                                                      & \text{type variable}\\
T &::=                                                                                      & \text{\emph{Type:}}\\
  &\quad  \mid \{B\ \mid\psi\}                                                              & \text{scalar}\\
  &\quad  \mid \funT{x}{T}{T}                                                               & \text{function}\\
S &::=  \forall\alpha_i.T                                                                   & \text{\emph{Type schema}}\\
C &::=  \cdot \mid x\colon T;C                                                              & \text{\emph{Context}}\\
\hat{T} &::=  \contT{C}{T}                                                                  & \text{\emph{Contextual Type}}
\end{array}
$$
\caption{Terms and types.}\label{fig:syntax}
\end{figure}

\autoref{fig:syntax} shows the syntax of the \lang language.

\custompar{Terms.}
Unlike previous work, we differentiate between the languages of refinements and programs.
The former consists of refinement terms $\psi$, which have sorts $\Delta$;
the exact set of interpreted symbols and sorts depends on the chosen refinement logic.
We refer to refinement terms of the Boolean sort $\mathbb{B}$ as formulas.

The language of programs consists of program terms $t$, 
which we split, following~\cite{OseraZd15} into \emph{elimination} and \emph{introduction} terms (E-terms and I-terms for short).
Intuitively, E-terms---variables and applications---propagate type information bottom-up,
\emph{composing} a complex property from properties of their components;
I-terms propagate type information top-down,
\emph{decomposing} a complex requirement into simpler requirements for their components.
Note that conditional guards, match scrutinees, and left-hand sides of applications are restricted to E-terms.
We further separate I-terms into branching terms---conditionals and matches---and function terms---abstractions and fixpoints---%
and disallow branching terms on the right-hand side of application.
This normal form is required to enable precise and efficient local type checking, as explained below.
It does not fundamentally restrict the expressiveness of the language:
every terminating program in lambda calculus can be translated to \lang
by first applying a standard $\beta$-normal $\eta$-long form~\cite{GveroKuKuPi13,OseraZd15}
and then pushing branching terms outside of applications, guards, and scrutinees.

\custompar{Types and Schemas.}
A \lang type is either a scalar---base type refined with a formula---or a dependent function type.
Base types include primitives, type variables, and user-defined datatypes with zero or more type parameters. 
Datatype constructors are represented simply as functions
that must have the type $\forall\alpha_1\ldots\alpha_m.T_1\to\ldots\to T_k\to D\ \alpha_1\ldots\alpha_m$.
A \emph{contextual type} is a pair of a sequence of variable bindings and a type that can mention those variables;
contextual types are useful for precise type checking of applications, as explained in \autoref{sec:theory:rules}.

\lang features ML-style polymorphism, where type variables are universally quantified at the outermost level to yield type schemas.
Unlike ML, we restrict type variables to range only over scalars,
which gives us the ability to determine whether a type is a scalar, even if it contains free type variables.
We found this restriction not to be too limiting in practice.

\begin{figure}
\small
\textbf{Well-Formed Types}\quad$\boxed{\env \vdash \hat{T}}$
\begin{gather*}
\inference[\textsc{WF-Sc}]
{\env;\nu:B \vdash \psi}
{\env \vdash \{B\mid \psi\} }
\quad
\inference[\textsc{WF-Ctx}]
{\env;C\vdash T}
{\env \vdash \contT{C}{T}}
\\
\inference[\textsc{WF-FO}]
{\env\vdash\{B\mid \psi\} & \env;x:\{B\mid \psi\} \vdash T}
{\env \vdash \funT{x}{\{B\mid \psi\}}{T} }
\\
\inference[\textsc{WF-HO}]
{T_x\ \text{non-scalar} & \env \vdash T_x & \env \vdash T}
{\env \vdash T_x \to T }
\end{gather*}
\textbf{Subtyping}\quad$\boxed{\env \vdash T\Subt T'}$
\begin{gather*}
\inference[\textsc{$\Subt$-Sc}]
{\env \vdash B\;\Subt\; B' & \mathsf{Valid}(\llbracket\env\rrbracket_{\psi\Implies\psi'}\And\psi\Implies\psi')}
{\env \vdash \{B\mid \psi\} \;\Subt\; \{B'\mid \psi'\} }
\\
\inference[\textsc{$\Subt$-Fun}]
{\env\vdash T_y\;\Subt\; T_x & \env;y: T_y \vdash [y/x]T\;\Subt\; T'}
{\env \vdash \funT{x}{T_x}{T} \;\Subt\; \funT{y}{T_y}{T'} }
\\
\inference[\textsc{$\Subt$-DT}]
{\env\vdash T_i \;\Subt\; T'_i}
{\env \vdash D\ T_i \;\Subt\; D\ T'_i}
\quad
\inference[\textsc{$\Subt$-Refl}]
{}
{\env \vdash B \;\Subt\; B}
\end{gather*}
\caption{Well-formedness and subtyping.}\label{fig:wf-subt}
\end{figure}

\custompar{Environments, Well-Formedness, and Subtyping.}
A \emph{typing environment} $\env$ is a sequence of
variable bindings $x: T$ and path conditions $\psi$;
we denote conjunction of all path conditions in an environment as $\pathc{\env}$.
A formula $\psi$ is \emph{well-formed} in the environment $\env$,
written $\env \vdash \psi$,
if it is of a Boolean sort and each of its free variables is bound in $\env$ to a type that is consistent with its sort in $\psi$.  
Well-formedness extends to types as shown in \autoref{fig:wf-subt}.
Note the two different rules for first-order and higher-order function types:
in a function type $\funT{x}{T_1}{T_2}$, $T_2$ may reference the formal argument $x$ only if $T_1$ is a scalar type
(that is, only first-order function types are dependent).

The \emph{subtyping} relation $\env \vdash T \Subt T'$ is relatively standard (\autoref{fig:wf-subt}).
For simplicity, we consider all datatypes covariant in their type parameters (rule \textsc{$\Subt$-DT});
if need be, variance can be selected per type parameter 
depending on whether it appears positively or negatively in the constructors.
The crucial part is the rule \textsc{$\Subt$-Sc}, which reduces subtyping between scalar types to implication between their refinements,
under the assumptions extracted from the environment.
Since the refinements are drawn from a decidable logic, this implication is decidable.
The function that extracts assumptions from the environment is parametrized by a formula $\psi$
and returns a conjunction of all path conditions and refinements of all variables mentioned in $\psi$ or the path conditions:
$$
\llbracket\Gamma\rrbracket_{\psi} = \pathc{\env} \And \bnd{\fv(\pathc{\env}) \cup \fv(\psi)}{\env}
$$
where
\begin{align*}
\bnd{v}{\env;x:\{B\mid\psi\}} &= 
  \begin{cases}
    [x/\nu]\psi \And \bnd{v\setminus \{x\} \cup \fv(\psi)}{\env} \quad(x\in v)\\
    \bnd{v}{\env} \quad\text{(otherwise)}
  \end{cases}\\
\bnd{v}{\env;x: T} &= \bnd{v}{\env} \quad\text{($T$ non-scalar)}\\
\bnd{v}{\cdot} &= \top
\end{align*}
This definition limits the effect of an environment variable with an inconsistent refinement
to only those subtyping judgments that (transitively) mention that variable.

\subsection{Round-Trip Type Checking}\label{sec:theory:rules}

This section describes the core of \lang's type system.
It is inspired by bidirectional type checking~\cite{PierceTu00},
which interleaves top-down and bottom-up propagation of type information 
depending on the syntactic structure of the program,
with the goal of making type checks more local.
Bidirectional typing rules use two kinds of typing judgments:
an \emph{inference} judgment, written $\env \vdash e \produce T$,
states that the term $t$ generates type $T$ in the environment $\env$;
a \emph{checking} judgment, $\env \vdash t \consume T$,
states that the term $t$ checks against a known type $T$ in the environment $\env$.
Accordingly, all typing rules can be split into inference and checking rules, 
depending on the judgment they derive.
\iflong
Bidirectional type checking rules for \lang are given in \autoref{app:proofs}.
\else
Bidirectional type checking rules for \lang can be found in the technical report~\cite{Techreport}.
\fi

In a bidirectional system, analyzing a program starts with propagating its top-level type annotation top-down using checking rules,
until the system encounters a term $t$ to which no checking rule applies.
At this point the system switches to bottom-up mode, infers the type $T'$ of $t$,
and checks if $T'$ is a subtype of the goal type;
if the check fails, $t$ is rejected.
Bidirectional type propagation is ``all-or-nothing'':
once a checking problem for a term cannot be decomposed perfectly into checking problems for its subterms,
the system abandons all information about the goal type and switches to purely bottom-up inference.
Our insight is that some information from the goal type can be retained in the bottom-up phase,
leading to more local error detection.
To this end, we modify the bidirectional inference judgment into a \emph{strengthening} judgment $\env \vdash t \consume T \produce T'$,
which reads as follows: 
in the environment $\env$, term $t$ checks against a known type $T$ \emph{and} generates a stronger type $T'$.
We call the resulting type system \emph{round-trip},
since it propagates types top-down and then back up.

\begin{figure}
\small
\textbf{Type Strengthening}\quad$\boxed{\env \entailsQ e \consume T \produce \hat{T'}}$
\begin{gather*}
\inference[\textsc{VarSc}]
{\env(x) =\{B \mid \psi\} & \env\vdash\{B \mid \psi\}\Subt T}
{\env \entailsQ x \consume T \produce \{B\mid\nu = x\} }
\\
\inference[\textsc{Var$\forall$}]
{\env(x)=\forall\alpha_i.T'\ &  \env \entailsQ T_i & \env\vdash[T_i/\alpha]T' \Subt T}
{\env \entailsQ x \consume T \produce [T_i/\alpha]T'}
\\
\inference[\textsc{AppFO}]
{\env \entailsQ e_1 \consume \{B \mid \bot\}\to T \produce \contT{C_1}{(\funT{x}{\{B \mid \psi\}}{T'})}    \\  
\env;C_1 \entailsQ e_2 \consume \{B \mid \psi\} \produce \contT{C_2}{T_x} \\
\env;C_1;C_2;x\colon T_x \vdash T' \Subt T}
{\env \entailsQ \App{e_1}{e_2} \consume T \produce \contT{C_1;C_2;x\colon T_x}{T'} }
\\
\inference[\textsc{AppHO}]
{\env \entailsQ e \consume \tbot\to T \produce \contT{C}{(T'_x \to T')}    \\  \env;C \entailsQ f \consume T'_x}
{\env \entailsQ \App{e}{f} \consume T \produce \contT{C}{T'}}
\end{gather*}

\textbf{Type Checking}\quad$\boxed{\env \entailsQ t \consume S}$
\begin{gather*}
\inference[\textsc{IE}]
{\env \entailsQ e \consume T \produce \hat{T'}}
{\env \entailsQ e \consume T}
\\
\inference[\textsc{Abs}]
{\env;y\colon T_x \entailsQ t \consume [y/x]T}
{\env \entailsQ \lambda y.t \consume (\funT{x}{T_x}{T})}
\\
\inference[\textsc{If}]
{\env \entailsQ e \consume \T{Bool} \produce \contT{C}{\{\T{Bool}\mid \psi\}} \\
\env;C;[\top/\nu]\psi \entailsQ t_1 \consume T  &  
\env;C;[\bot/\nu]\psi \entailsQ t_2 \consume T}
{\env \entailsQ\T{if}\ e\ \T{then}\ t_1\ \T{else}\ t_2 \consume T}
\\
\inference[\textsc{Match}]
{\env \entailsQ e \consume \ttop \produce \contT{C}{\{D\ T_k\mid\psi\}}  \\
\T{C}_i = T_i^j \to \{D\ T_k\mid\psi'_i\}    &
\env_i = \{x_i^j\colon T_i^j\};[x'/\nu]\psi'_i \\
\env;C;[x'/\nu]\psi;\env_i \entailsQ t_i \consume T}
{\env \entailsQ \T{match}\ e\ \T{with}\ |_i\ \T{C}_i \langle x_i^j\rangle \mapsto t_i \consume T}
\\
\inference[\textsc{TAbs}]
{\env \entailsQ t \consume T  &  \alpha_i\ \text{not free in}\ \env}
{\env \entailsQ t \consume \forall\alpha_i.T}
\\
\inference[\textsc{Fix}]
{\env;x\colon S^\before \entailsQ t \consume S}
{\env \entailsQ \T{fix}\ x.t \consume S}
\end{gather*}
\caption{Rules of round-trip type checking.}\label{fig:round-trip}
\end{figure}

Derivation rules for round-trip type checking are presented in \autoref{fig:round-trip}.
All judgments are parametrized by the set of qualifiers $\quals$,
used to construct unknown refinements as explained below.
Checking rules encode the way a checking judgment for an I-term $t$ is decomposed into simpler checking judgments for its components.
Strengthening rules encode the way a goal type for an E-term $e$ is decomposed into \emph{over-approximate} goal types for its subterms, 
which are necessary but in general not sufficient for correctness,
while the precise type of $e$ is constructed from the inferred types of its subterms.
A round-trip type checker starts with a top-down phase, just as a bidirectional one would;
when it encounters an E-term, it applies the corresponding strengthening rule and discards the inferred type (see rule \textsc{IE}).
Thus, instead of detecting type errors at the boundary between I- and E-terms,
the round-trip system performs local checks for each variable and function application.

In order to support goals types with an underspecified shape
(as required for match scrutinees and higher-order applications),
we augment \lang with $\ttop$ and $\tbot$ types,
which are, respectively, a supertype and a subtype of every type.
Note that these types are ignored when computing the logical representation of the environment $\llbracket\env\rrbracket_\psi$,
since they are not considered scalar.
Also note that the precision of round-trip type checking crucially relies
on the fact that only E-terms appear in strengthening judgments;
this is why \lang bans branching terms from function arguments, conditional guards, and match scrutinees.

\custompar{Polymorphic instantiations.}
The rule \textsc{Var$\forall$}, which handles polymorphic instantiations,
replaces type variables $\alpha_i$ with types $T_i$ chosen nondeterministically to satisfy all subtyping checks%
\footnote{The same rule handles monomorphic non-scalar variables, assuming zero type variables.}.
In order to tame this nondeterminism, following~\cite{RondonKaJh08}, we restrict $T_i$s to \emph{liquid types}.
A formula $\psi$ is \emph{liquid} in $\env$ with qualifiers $\quals$,
written $\env \entailsQ \psi$,
if it is a conjunction of well-formed formulas, 
each of which is obtained from a qualifier in $\quals$ by substituting $\star$-placeholders with variables.
This notion extends to types, $\env \entailsQ T$, in a way analogous to well-formedness (\autoref{fig:wf-subt}).
Note that the set of all liquid formulas in a given environment is finite,
and so is the set of all liquid types with a fixed shape.
\autoref{sec:theory:subtyping} and \autoref{sec:theory:horn} present a deterministic algorithm for finding the types $T_i$. 

\custompar{Applications.}
The application rules \textsc{AppFO} and \text{AppHO} are the core of the round-trip type system:
they are responsible for propagating partial type information down to the left-hand side of an application.
The type system distinguishes between first-order and higher-order applications,
since in a function type $\funT{x}{T_1}{T_2}$, $T_2$ cannot mention $x$ if $T_1$ is a function type (see \autoref{fig:wf-subt}).
As a result, a higher-order application always yields the type $T_2$ independently of the argument.
If instead $T_1$ is a scalar type, we have to replace $x$ inside $T_2$ with the actual argument of the application.
Unfortunately, we cannot assign the application $\App{e_1}{e_2}$ the type $[e_2/x]T_2$,
since $e_2$ is a program term, which does not necessarily have a corresponding refinement precisely capturing its semantics.
We address this problem by assigning $\App{e_1}{e_2}$ a \emph{contextual type} $\contT{C}{T_2}$,
where the context $C$ binds the variable $x$ to the precise type of $e_2$.

\begin{example*}
We demonstrate the local error detection enabled by rule \textsc{AppFO}
on the following type-checking problem:
$$
\env\entailsQ \App{\App{\T{append}}{xs}}{xs} \consume \{\T{List Pos}\mid \App{\T{len}}{\nu} = 5\}
$$
where \T{Pos} is an abbreviation for $\{\T{Int}\mid\nu > 0\}$
and $\env$ contains the following bindings:
\begin{align*}
&xs: \{\T{List Nat}\mid \App{\T{len}}{\nu} = 2\};\\
&\T{append}: \forall\alpha . 
              \begin{aligned}[t]
              &l : \{\T{List}\  \alpha\mid \App{\T{len}}{\nu} \geq 0\} \\
              &\to r : \{\T{List}\  \alpha\mid \App{\T{len}}{\nu} \geq 0\}\\ 
              &\to \{\T{List}\  \alpha\mid \App{\T{len}}{\nu}=\App{\T{len}}{l} + \App{\T{len}}{r}\}
             \end{aligned}
\end{align*}
Intuitively, the constraint on the length of the output list is hard to verify without analyzing the whole expression,
while the mismatch in the type of the list elements can be easily found without considering the second argument of \T{append}.
Refinement types provide precise means to distinguish those cases:
the length-related refinement of \T{append} is dependent on the arguments $l$ and $r$,
whereas the type of the list elements cannot possibly mention $l$ or $r$,
since it has to be well-formed in a scope where these variables are not defined.

Applying the \textsc{AppFO} rule twice to the judgment above yields
$\env\entailsQ\T{append}\consume \{B_0\mid\bot\} \to \{B_1\mid\bot\} \to \{\T{List Pos}\mid \T{len}\ \nu = 2\}$,
where the base types $B_0$ and $B_1$ are yet to be inferred.
Applying \textsc{Var$\forall$}, and decomposing the resulting subtyping check with \textsc{$\Subt$-Fun}, we get
\begin{align*}
&\env;l: \{B_0\mid\bot\};r: \{B_1\mid\bot\} \vdash\\
&\{\T{List}\  T_0\mid \App{\T{len}}{\nu}=\App{\T{len}}{l} + \App{\T{len}}{r}\} \Subt \{\T{List Pos}\mid \T{len}\ \nu = 2\}
\end{align*}
Using \textsc{$\Subt$-Sc}, this judgment can be decomposed into an implication on refinements---vacuous thanks to the types of $l$ and $r$---%
and subtyping on base types, $\T{List}\  T_0 \Subt \T{List Pos}$,
which is not vacuous since here $l$ and $r$ are out of scope.
The first argument of \T{append} is checked against the type $\T{List}\  T_0$ (in the second premise of \textsc{AppFO}),
which imposes a subtyping check $\env\vdash \T{List Nat}\  \Subt \T{List}\  T_0$.
Since no type $T_0$ satisfies both subtyping relations,
the type checker rejects the term $\T{append}\ xs$.
\end{example*}

\custompar{Recursion.}
Another rule in \autoref{fig:round-trip} that deserves some discussion is \textsc{Fix}, which comes with a termination check.
In the context of synthesis, termination concerns are impossible to ignore,
since non-terminating recursive programs are always simpler than terminating ones, 
and thus would be synthesized first if considered correct.  
The \textsc{Fix} rule gives the ``recursive call'' a termination-weakened type $S^\before$,
which intuitively denotes ``$S$ with strictly smaller arguments''.
The exact definition of termination-weakening is a parameter to our system.
Our implementation provides a predefined well-founded order on primitive base types,
and allows the user to define one on datatypes by mapping them to primitive types using termination metrics;
then $S^\before$ is defined as a lexicographic order on the tuple of all arguments of $S$ that have an associated well-founded order.

\subsection{Soundness and Completeness}\label{sec:theory:proofs}

We show soundness and completeness of round-trip type checking
\emph{relative} to purely bottom-up liquid type inference~\cite{RondonKaJh08}.
\iflong
For detailed proofs see \autoref{app:proofs}.
\else
Detailed proofs are available in the technical report~\cite{Techreport}.
\fi

Round-trip type checking is \emph{sound} in the sense that 
whenever a \lang term $t$ type-checks against a schema $S$, $\env\entailsQ t\consume S$,
there exist a set of qualifiers $\quals'$ and a schema $S'$,
such that the bottom-up system infers $S'$ for $t$, $\env\vdash_{\quals'} t :: S$,
and $\env\vdash S' \Subt S$.
Note that bottom-up inference might require strictly more qualifiers than type checking: 
in the bottom-up system, generating types for branching statements and abstractions
imposes the requirement that these types be liquid;
the round-trip system obtains the types of those terms by decomposing the goal type,
thus the liquid restriction does not apply.
In practice the difference is irrelevant, 
since the type inference algorithm can extract the missing qualifiers 
from the top-level goal type and the preconditions of component functions.
Thus, if $\quals$ contains a sufficient set of qualifiers
such that the goals schema is liquid ($\env\entailsQ S$) 
and the preconditions of component function are liquid, which we denote as $\entailsQ\env$,
then we can take $\quals' = \quals$.

\begin{theorem}[Soundness of round-trip type checking]
If $\entailsQ \env$, $\env\entailsQ S$,
and $\env \entailsQ t \consume S$, then $\env\entailsQ t :: S'$ and $\env \entailsQ S'\Subt S$.
\end{theorem}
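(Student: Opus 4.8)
The plan is to proceed by simultaneous structural induction on the round-trip derivation, strengthening the statement so that both judgment forms are handled at once. For the checking judgment I will carry the hypothesis exactly as stated: whenever $\entailsQ\env$, $\env\entailsQ S$, and $\env\entailsQ t\consume S$, the bottom-up system infers some $S'$ with $\env\entailsQ t :: S'$ and $\env\entailsQ S'\Subt S$. For the strengthening judgment I will use the companion hypothesis: whenever $\entailsQ\env$, $\env\entailsQ T$, and $\env\entailsQ e\consume T\produce\contT{C}{T'}$, the bottom-up system infers some $T''$ with $\env\entailsQ e :: T''$ such that $\env\entailsQ T''\Subt\contT{C}{T'}$ (reading the contextual type through a flattening that closes the bindings of $C$) and moreover $\env;C\vdash T'\Subt T$. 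The two hypotheses feed each other through rule \textsc{IE} and the application rules, so they must be established together.

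\textbf{E-term / strengthening cases.} The key observation is that every strengthening rule is an annotated version of a bottom-up inference rule: deleting the extra subtyping premises that check against the incoming goal leaves precisely the rule the bottom-up system uses, and the type it produces is the same up to the liquid-instantiation choice. For \textsc{VarSc} the round-trip type $\{B\mid\nu=x\}$ is exactly the selfified type bottom-up assigns to a variable, and since $\nu=x$ entails $\psi$, selfification gives $\{B\mid\nu=x\}\Subt\{B\mid\psi\}\Subt T$, discharging the conclusion. For \textsc{Var$\forall$} the restriction of the instantiations $T_i$ to liquid types is harmless, because under $\entailsQ\env$ and $\env\entailsQ S$ we may take $\quals'=\quals$, so bottom-up admits the same instantiation; the premise $\env\vdash[T_i/\alpha]T'\Subt T$ is then the needed subtyping. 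For \textsc{AppFO} and \textsc{AppHO} I apply the two companion hypotheses to the premises and reassemble with the bottom-up application rule, using contravariance of the domain and covariance of the codomain (rules \textsc{$\Subt$-Fun} and \textsc{$\Subt$-DT}) to turn the subtypings on the function and argument into a subtyping on the result. The only real bookkeeping here is the contextual type: I will need an auxiliary lemma stating that $\contT{C}{T'}$ is well-behaved under subtyping and that extending the context preserves well-formedness and the environment encoding $\valuation{\env}{\psi}$, so that the substitution recorded in $C$ matches the dependent substitution performed by the bottom-up rule.

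\textbf{I-term / checking cases, and the main obstacle.} Rules \textsc{IE}, \textsc{TAbs}, and \textsc{Fix} are immediate: \textsc{IE} is a direct appeal to the strengthening hypothesis; \textsc{TAbs} mirrors bottom-up generalization under the same freshness side-condition; and \textsc{Fix} uses the identical termination-weakened binding $S^\before$ in both systems, so the termination apparatus is orthogonal to the subtyping argument. Rule \textsc{Abs} and the branching rules \textsc{If} and \textsc{Match} are where the content lies, and I expect them to be the main obstacle. For these the bottom-up system must synthesize a single \emph{liquid} type for the whole term (a common type both branches or arms can be widened to), whereas round-trip obtained the type by decomposing the goal and therefore never imposed liquidity. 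The crux is to show that under $\entailsQ\env$ and $\env\entailsQ S$ no qualifiers beyond $\quals$ are required: for \textsc{If}, the companion hypothesis gives branch types $T_1',T_2'\Subt T$, and since $T$ is itself liquid it is a liquid upper bound, so the least liquid upper bound $T_1'\sqcup T_2'$ exists (the liquid types of a fixed shape form a finite lattice) and satisfies $T_1'\sqcup T_2'\Subt T$; this join is the $S'$ the bottom-up rule infers. \textsc{Match} is analogous once the constructor path conditions and the freshly bound pattern variables $x_i^j\colon T_i^j$ are added to the environment, and \textsc{Abs} is the degenerate one-branch instance under the extended binding $y\colon T_x$.

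\textbf{Supporting lemmas and the delicate step.} Throughout I will rely on (i) a weakening lemma ensuring subtyping, well-formedness, and liquidity are preserved when the environment is extended with the bindings and path conditions the rules introduce, (ii) the fact that selfification of a liquid type is liquid, and (iii) existence of least liquid upper bounds at a fixed shape, which underwrites the branching cases. The one step I expect to be genuinely delicate is verifying, in \textsc{If} and \textsc{Match}, that the liquid upper bound built from the branch types is a subtype of the \emph{goal} rather than merely bounded above by it in environments carrying divergent path conditions; lining up the encodings $\valuation{\env}{\psi}$ across the two branches is the part most likely to require care.
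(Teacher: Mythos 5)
Your overall strategy is sound and covers essentially the same ground as the paper's proof, but it is organized differently: the paper factors the argument through an intermediate \emph{bidirectional} type system and proves three separate lemmas --- a Type Strengthening lemma, which is exactly your companion claim $\env;C\vdash T'\Subt T$ for strengthening judgments; a translation from round-trip to bidirectional derivations; and a translation from bidirectional to bottom-up derivations --- whereas you collapse everything into a single simultaneous induction carrying a two-part hypothesis. Your version is more self-contained; the paper's factoring buys reuse, since the same bidirectional system serves as the stepping stone for the completeness theorem in the opposite direction. The case analysis, and the role of the hypotheses $\entailsQ\env$ and $\env\entailsQ S$ in justifying $\quals'=\quals$ and supplying liquid types where the bottom-up rules demand them, are essentially identical in the two treatments.

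Two concrete points where your plan diverges from what works most smoothly. First, in the \textsc{If}, \textsc{Match}, and \textsc{Abs} cases you do not need a least liquid upper bound of the branch types: the goal type $T$ is itself liquid (by $\env\entailsQ S$ and its propagation through the decomposition), so one can simply pick $T$ as the type the bottom-up rule infers and coerce each branch's inferred type up to $T$ via the \textsc{Subt} rule \emph{inside that branch's own extended environment}. This dissolves the step you flag as delicate --- comparing the join against $T$ across environments carrying $\psi$ and $\neg\psi$ really would be awkward --- so the lattice machinery can be dropped entirely. Second, your \textsc{Fix} case misdescribes the target system: the bottom-up rule binds the recursive occurrence to $S$, not to $S^\before$, so the two premises live in different environments. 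The repair is the observation that $\env\vdash S\Subt S^\before$ (termination weakening only strengthens preconditions), combined with your weakening lemma (i) to transport the body's derivation from $\env;x\colon S^\before$ to $\env;x\colon S$. Neither issue undermines the approach.
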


Unlike liquid type inference, the round-trip system requires a proof of termination for all fixpoints;
thus if $\env\entailsQ t :: S$,
but $t$'s termination cannot be shown using the chosen definition of termination weakening,
the round-trip type system will reject $t$.
Thus we show completeness for a \emph{weakened} round-trip system,
obtained from \autoref{fig:round-trip} by replacing $S^{\before}$ in the premise of the \textsc{Fix} rule by $S$.
We denote the checking judgment of the modified system as $\env \entailsQ^* t \consume S$.

\begin{theorem}[Completeness of round-trip type checking]
If $\env\entailsQ t :: S$, then $\env \entailsQ^* t \consume S$.
\end{theorem}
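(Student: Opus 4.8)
The plan is to proceed by induction on the derivation of the bottom-up judgment $\env \entailsQ t :: S$, constructing a weakened round-trip derivation $\env \entailsQ^* t \consume S$ rule by rule. Because the round-trip system treats I-terms and E-terms with different judgment forms, I would first strengthen the induction hypothesis so that enough information is carried through applications. Concretely, for an E-term $e$ I would prove the stronger statement: whenever the bottom-up system infers $\env \entailsQ e :: T_e$, then for every $T$ of the same shape as $T_e$ with $\env \vdash T_e \Subt T$, the strengthening judgment $\env \entailsQ^* e \consume T \produce \hat{T_e'}$ succeeds with a reconstructed contextual type $\hat{T_e'}$ that is at least as precise as $T_e$ (its flattening is a subtype of $T_e$ under the bound context). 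The plain checking statement for I-terms then follows, with the \textsc{IE} rule bridging E-terms into the checking judgment.

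The checking cases are largely mechanical. For \textsc{Abs}, \textsc{If}, \textsc{Match}, and \textsc{TAbs} I would invert the corresponding bottom-up rule, apply the induction hypothesis to each subterm against the goal type obtained by decomposition, and reassemble the round-trip derivation; the environments and path conditions added during decomposition coincide with those of the bottom-up rule. A point worth emphasizing is that, unlike in the soundness direction, no extra qualifiers are needed: the round-trip rules decompose the known goal type for branches and abstractions rather than synthesizing a liquid type for them, so the liquid restriction that the bottom-up system imposes on these types is simply dropped, making this direction strictly easier with the same $\quals$. For the E-term leaves \textsc{VarSc} and \textsc{Var$\forall$}, I would reuse the very liquid instantiations chosen by the bottom-up derivation (which are drawn from the same liquid types over $\quals$) and discharge the subtyping premises using the subtyping facts already established in that derivation.

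I expect the main obstacle to be the application case \textsc{AppFO}. Here the two systems reconstruct the result type differently: the bottom-up system substitutes the argument, forming $[e_2/x]T'$, whereas round-trip produces a contextual type $\contT{C_1;C_2;x\colon T_x}{T'}$ that binds $x$ to the precise type of the argument. The crux is a lemma stating that subtyping under the context binding $x$ to the argument's precise type coincides with subtyping after substitution, so that the final premise $\env;C_1;C_2;x\colon T_x \vdash T' \Subt T$ of \textsc{AppFO} is discharged by exactly the subtyping check performed by the bottom-up application rule. Feeding the strengthening judgment for $e_1$ also requires care: I would invoke the strengthened hypothesis with the shape-only goal $\{B\mid\bot\}\to T$, which forces establishing $\funT{x}{T_x}{T'} \Subt \{B\mid\bot\}\to T$; the contravariant domain premise of \textsc{$\Subt$-Fun} is vacuous because $\{B\mid\bot\}$ is a subtype of everything, and the covariant codomain premise reduces to slack already available from the bottom-up derivation. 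The precision invariant in the induction hypothesis is what makes this close: because round-trip reconstructs types at least as strong as the bottom-up ones (for variables it yields the singleton $\{B\mid\nu=x\}$), the refinements available at the application site are no weaker than those the bottom-up system relied on. The higher-order case \textsc{AppHO} is simpler, since $T'$ cannot mention the argument and its second premise follows directly from the induction hypothesis applied to $f$.

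Finally, the \textsc{Fix} case is immediate in the weakened system: the bottom-up inference performs no termination check and gives the recursive occurrence the full schema $S$, which is exactly the premise of the weakened \textsc{Fix} rule (with $S^{\before}$ replaced by $S$). Thus the induction goes through, yielding $\env \entailsQ^* t \consume S$.
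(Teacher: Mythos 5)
Your overall strategy is the right one and matches the paper's in its essentials: induction over the derivation, with a strengthened statement for E-terms that quantifies over well-formed supertype goals, and the key move of feeding the function position of an application the goal $\{B\mid\bot\}\to T$ so that the contravariant domain check of \textsc{$\Subt$-Fun} is vacuous. The paper factors the argument into two lemmas --- bottom-up to bidirectional, then bidirectional to round-trip --- whereas you do a single induction; that is a presentational difference only, and your fused induction hypothesis (for every well-formed supertype $U$ of the inferred contextual type, strengthening against $U$ succeeds and reproduces that contextual type) is exactly the composite of the paper's two E-term statements. Your observations that no extra qualifiers are needed in this direction and that the weakened \textsc{Fix} case is immediate are both correct and are made in the paper as well.

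Two points in your \textsc{AppFO} case need adjustment. First, the ``crux lemma'' you anticipate --- that subtyping under a context binding $x$ to the argument's precise type coincides with subtyping after substituting $e_2$ for $x$ --- is not needed: the reference bottom-up \textsc{App} rule already assigns applications a contextual type of the form $\contT{C_1;C_2;x\colon T_x}{T'}$ rather than $[e_2/x]T'$ (the paper introduces contextual types precisely because that substitution is not meaningful for arbitrary program terms), so the final premise of the round-trip \textsc{AppFO} is literally the subtyping check already present in the bottom-up derivation. Had the lemma actually been needed, it would be false in general, for the same reason. Second, the step you describe as ``the covariant codomain premise reduces to slack already available from the bottom-up derivation'' is the one place where a real argument is required: the premise to discharge is $\env;C_1;x\colon\{B\mid\bot\}\vdash T'\Subt T$, while what the induction gives you is the same subtyping in the different environment $\env;C_1;C_2;x\colon T_x$. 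The transfer is not by monotonicity of the environment encoding; the paper closes it by a case split on whether $x$ occurs in the types being compared --- if not, the two environment encodings agree on the relevant variables, and if so, the $\bot$ refinement of $x$ makes the antecedent of the generated implication false. With that repair your induction goes through; the remaining cases (\textsc{AppHO}, the checking rules) are as routine as you say.
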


\subsection{Type System Extensions}\label{sec:theory:extensions}

In this section we further improve the locality of type checking for
function applications and branching terms.

\custompar{Type Consistency.}
Recall the type checking problem 
$$
\env\entailsQ \App{\App{\T{append}}{xs}}{xs} \consume \{\T{List Pos}\mid \App{\T{len}}{\nu} = 5\}
$$
from \autoref{sec:theory:rules}, 
and let us change the type of $xs$ to $\{\T{List Pos}\mid \App{\T{len}}{\nu} = 6\}$.
In this case, $xs$ has the right element type, \T{Pos},
but intuitively the partial application $\App{\T{append}}{xs}$ can still be safely rejected,
since no second argument with a non-negative length can fulfill the goal type.

\begin{figure}
\small
\textbf{Consistency}\quad$\boxed{\env \vdash T\Consi T'}$
\begin{gather*}
\inference[\textsc{$\Consi$-Sc}]
{\env \vdash B\Consi B' & \mathsf{Sat}(\llbracket\Gamma\rrbracket_{\psi \And \psi'}\And\psi\And\psi')}
{\env \vdash \{B\mid \psi\} \Consi \{B'\mid \psi'\} }
\\
\inference[\textsc{$\Consi$-Fun}]
{\env;x: T_x \vdash T\Consi [x/y]T'}
{\env \vdash \funT{x}{T_x}{T} \Consi \funT{y}{T_y}{T'} }
\\
\inference[\textsc{$\Consi$-DT}]
{\env\vdash T_i \Consi T'_i}
{\env \vdash D\ T_i \Consi D\ T'_i}
\quad
\inference[\textsc{$\Consi$-Refl}]
{}
{\env \vdash B \Consi B}
\end{gather*}
\caption{Type consistency.}\label{fig:consistency}
\end{figure}

To formalize this intuition we introduce the notion of \emph{type consistency}, defined in \autoref{fig:consistency}.
Two scalar types are consistent if they have a common inhabitant for some valid valuation of environment variables.
For function types, the relation is not symmetric:
a type $\funT{x}{T_x}{T}$ is consistent with a goal type if their return types are consistent for some value of $x$ of type $T_x$.

We add a premise $\env \vdash T\Consi T'$ to every rule in \autoref{fig:round-trip}
that already has the premise of the form $\env \vdash T\Subt T'$.
The additional premise has no effect on full applications, since for scalar types consistency is subsumed by subtyping.
The consistency check can, however, reject a \emph{partial} application $e$ allowed by subtyping,
due to goals generated by the rule \textsc{AppFO}, which have a vacuous argument type $\{B\mid\bot\}$.
It is easy to show that in the absence of consistency checks, 
any application of such $e$ would always be rejected by the subtyping check in \textsc{AppFO};
thus introducing consistency checks does not affect completeness of type checking.
With consistency checks in place, 
the term $\App{\T{append}}{xs}$ in the example above is rejected since the formula
$\App{\T{len}}{xs} = 6 \And \App{\T{len}}{r} \geq 0 \And \App{\T{len}}{\nu} = \App{\T{len}}{xs} + \App{\T{len}}{r} \And \App{\T{len}}{\nu} = 5$
is unsatisfiable.

\custompar{Liquid Abduction.}
Consider the \textsc{If} rule in \autoref{fig:round-trip}:
the type checker can analyze the two branches of the conditional independently of each other,
but can only proceed with either branch once the precise type of the guard has been inferred.
In the context of synthesis this amount to blindly enumerating type-correct boolean expressions as guards
and then checking if any of them enables synthesis of a correct branch.
The goal of this section is to improve the locality of the \textsc{If} rule
in order to avoid such blind enumeration.

The idea comes from \emph{condition abduction}~\cite{LeinoMi12,KneussKuKuSu13,AlurCR15}:
instead of starting with the guard, for which no information can be extracted from the goal type,
start by analyzing one of the branches and use logical abduction to infer the weakest assumption under which the branch fulfills the goal type.
If such a condition does not exist or is a contradiction, the branch candidate is deemed ill-typed;
otherwise the abduced condition can be used as a specification for the guard.

This strategy relies on the availability of a sufficiently fast mechanism to perform logical abduction, which is generally challenging.
In \lang, we treat unknown path conditions the same way as unknown refinements in polymorphic instantiations:
we restrict their valuations to liquid formulas over environment variables,
and use the greatest-fixpoint Horn solver (described in~\autoref{sec:theory:horn}) to discover the weakest such valuation.
We refer to the modified rule for conditionals as the \emph{liquid abduction} rule:
$$
\inference[\textsc{IfAb}]
{\env \entailsQ \psi & \mathsf{Sat}(\llbracket\Gamma\rrbracket_{\psi}\And\psi) \\ 
\env \entailsQ e \consume \{\T{Bool}\mid \nu = \psi\} \\
\env;\psi \entailsQ t_1 \consume T  &  \env;\neg\psi \entailsQ t_2 \consume T}
{\env \entailsQ\T{if}\ e\ \T{then}\ t_1\ \T{else}\ t_2 \consume T}
$$
This rule limits completeness of round-trip type checking by restricting valid guard types to the form above.
Most notably, it excludes guards that contain function composition,
and thus users have to provide wrapper components to encapsulate complex guard predicates;
in all our experiments, the set of required guard components was quite intuitive,
thus we conclude that the trade-off between expressiveness and efficiency offered by liquid abduction is reasonable in the context synthesis.

\custompar{Match Abduction.}
A similar technique can be used to propose pattern matching,
assuming the types of potential scrutinees are restricted to liquid types.
In this case, however, the liquid restriction imposes more substantial limitations on the structure of the program:
abduction only works if the scrutinee is a variable
and its datatype has at least one scalar constructor (such as \T{Nil} in \T{List}).
Thus, \lang employs a combined approach:
it first tries an abduction-based rule, but if that fails, the system reverts to the original \textsc{Match} rule of \autoref{fig:round-trip}.
As a result, type checking (and synthesis) enjoys the efficiency benefits of abduction without compromising completeness.

\subsection{The Local Liquid Type Checking Algorithm}\label{sec:theory:subtyping}

Starting from the round-trip typing rules presented above,
this section develops \emph{local liquid type checking}:
a deterministic algorithm that takes as input a \lang program $t$, an environment $\env$, a goal schema $S$, and a set of qualifiers $\quals$,
and either produces a derivation of $\env\entailsQ t\consume S$ or rejects the program.
The main challenge is to find suitable instantiations for polymorphic components, as required by the rule \textsc{Var$\forall$};
to this end, the algorithm replaces the type variables $\alpha_i$ in the component schema with fresh free type variables $\alpha'_i$%
\footnote{We prime the names of free type variables to differentiate them from the bound type variables of the top-level goal schema.},
extracts \emph{subtyping constraints} on $\alpha'_i$ from the subtyping premises of the derivation,
and then solves the subtyping constraints 
to either discover a valid type assignment mapping free type variables to liquid types, or conclude that such an assignment does not exist.

For the purpose of constraint solving, we extend the syntax of refinement terms with \emph{predicate unknowns} $P_i$.
Local liquid type checking maintains a set of subtyping constraints $\constraints=\{\env_i\vdash T_i\Subt T'_i\}$,
a set of Horn constraints $\horns=\{\psi_i\}$,
a \emph{type assignment} $\typeass=[\alpha'_i \mapsto T_i]$,
and a \emph{liquid assignment} $\liquidass=[P_i \to \{\psi\}_i]$.
We denote with $\valuation{\psi}{\liquidass}$ the formula $\psi$ with all predicate unknowns substituted with conjunctions of their valuations in $\liquidass$.
The type checking process alternates between the following two steps:
it either extends the type derivation by applying one of the rules of \autoref{fig:round-trip},
adding any of its subtyping premises to $\constraints$,
or it picks a constraint $c$ from $\constraints$ and solves it;
constraint solving is formalized in the procedure $\solve$ in \autoref{fig:solve}.

\begin{figure}
\small
\begin{align}
\nonumber\solve&(\env\vdash c) =\ \T{match}\ c\ \T{with}\\
\nonumber        &\mid  \{\alpha'\mid\psi\} \Subt T, \alpha'\in \dom(\typeass)  \longrightarrow\\
\label{fml:subst-l}      &\quad \constraints \gets \constraints \cup \{\env\vdash \refine(\typeass(\alpha'), \psi) \Subt T\}\\
\label{fml:subst-r}    &\mid  T \Subt \{\alpha'\mid\psi\}, \alpha'\in \dom(\typeass)  \longrightarrow\text{(symmetrical)}\\
\nonumber        &\mid  \{\alpha'\mid\psi_1\} \Subt \{\beta'\mid\psi_2\}  \longrightarrow\\ 
\label{fml:vars}      &\quad \constraints \gets \constraints \cup \{\alpha'\mid\psi_1\} \Subt \{\beta'\mid\psi_2\}\\
\nonumber        &\mid  \{\alpha'\mid\psi\} \Subt T, \alpha'\notin T  \longrightarrow \\
\nonumber        &\quad \typeass \gets \typeass[\alpha'\mapsto \fresh(T)];\\
\label{fml:fresh-l}    &\quad \constraints := \constraints\cup \{\env\vdash \{\alpha'\mid\psi\} \Subt T\} \\
\label{fml:fresh-r}    &\mid  T\Subt\{\alpha'\mid\psi\} \longrightarrow \text{(symmetrical)} \\
\nonumber        &\mid  (\funT{x}{T_x}{T_1}) \Subt (\funT{y}{T_y}{T_2}) \longrightarrow \\
\label{fml:fun} &\quad \constraints \gets \constraints \cup \{\env\vdash T_y \Subt T_x, \env;y: T_y\vdash [y/x]T_1\Subt T_2\} \\
\nonumber        &\mid  \{D\ T^i_1 \mid \psi_1\} \Subt \{D\ T^i_2 \mid \psi_2\} \longrightarrow \\
\label{fml:datatype} &\quad \constraints \gets \constraints \cup \{\env\vdash\{D \mid \psi_1\} \Subt \{D \mid \psi_2\}, \env\vdash T^i_1 \Subt T^i_2\} \\
\nonumber        &\mid  \{B \mid \psi_1\} \Subt \{B \mid \psi_2\}  \longrightarrow \\
\label{fml:base-invalid}  &\quad \horns \gets \horns\cup \{\llbracket\env\rrbracket_{\psi_1\Implies \psi_2}\And\psi_1\Implies \psi_2\};\\ 
\nonumber                 &\quad \liquidass \gets \horn(\liquidass,\horns) \\
            &\mid  \text{otherwise}  \longrightarrow \text{fail}
\end{align}
\vspace{-7mm}
\begin{align*}
&\refine(\{B \mid \psi\}, \psi') = \{B \mid \psi\And\psi'\}\\
&\fresh(T) 
      \begin{aligned}[t]
      &=\ \T{match}\ T\ \T{with}\\      
      &\mid \{\alpha'\mid\psi\}\longrightarrow \beta' \\ 
      &\mid\{D\ T^i \mid \psi\} \longrightarrow \{D\ \fresh(T^i) \mid P\},\quad\liquidass \gets \liquidass[P\mapsto \emptyset]\\ 
      &\mid\{B \mid \psi\} \longrightarrow \{B \mid P\},\quad\liquidass \gets \liquidass[P\mapsto \emptyset]
      \end{aligned}\\
&\horn(\liquidass,\horns) = \T{if}\ \begin{aligned}[t]
                            &\forall h\in \horns. \mathsf{Valid}(\llbracket h\rrbracket_{\liquidass})\ \T{then}\ \liquidass\ \T{else}\\
                            &\T{let}\ h\gets \{\horns\mid \neg\mathsf{Valid}(\llbracket h\rrbracket_{\liquidass})\}\ \T{in}\\
                            &\T{let}\ \liquidass'\gets \strengthen(\liquidass, h)\ \T{in}\ \horn(\liquidass',\horns)
                            \end{aligned}
\end{align*}
\caption{Solving subtyping constraints.}\label{fig:solve}
\end{figure}

$\solve$ does one of the following, depending on the operands of a subtyping constraint:
it either substitutes a type variable for which an assignment already exists (\autoref{fml:subst-l}, \autoref{fml:subst-r}),
unifies a type variable with a type (\autoref{fml:fresh-l}, \autoref{fml:fresh-r}),
decomposes subtyping over compound types (\autoref{fml:fun}, \autoref{fml:datatype}),
or translates subtyping over scalar types into a Horn constraint 
and uses the procedure $\horn$, described in the next section, to find an $\liquidass$ that satisfies all Horn constraints (\autoref{fml:base-invalid}).
Local liquid type checking terminates when the entire type derivation has been built,
and all constraints in $\constraints$ are between free type variables (only \autoref{fml:vars} applies).

During unification of $\alpha'$ and $T$, procedure $\fresh$ inserts fresh predicate unknowns in place of all refinements in $T$;
note that due to the incremental nature of our algorithm,
$T$ might itself contain free type variables, which are simply replaced with fresh free type variables
to be unified later as more subtyping constraints arise.
This novel feature of local liquid type checking, which we call \emph{incremental unification}, 
is crucial for early error detection.
Existing refinement type checkers~\cite{Flanagan06,RondonKaJh08} cannot interleave shape and refinement discovery,
since they rely on the global Hindley-Milner inference algorithm to fully reconstruct the shapes of all types in the program
before discovering their refinements.

\begin{example*}
Starting from empty $\typeass$ and $\liquidass$,
$\solve(\vdash \alpha'\Subt \T{List}\ \beta'\mid\T{len}\ \nu > 0)$
instantiates $\alpha'$ by \autoref{fml:fresh-l} leading to $\typeass=[\alpha' \mapsto \{\T{List}\ \gamma'\mid P_0\}]$,
$\liquidass=[P_0\mapsto\emptyset]$ and recycles the subtyping constraint;
next by \autoref{fml:subst-l} and \autoref{fml:datatype}, the constraint is decomposed into
$\vdash \{\T{List}\ \mid P_0\}\Subt \{\T{List}\ \mid\T{len}\ \nu > 0\}$ and $\vdash\gamma' \Subt \beta'$.
The former produces a Horn constraint $P_0\Implies \T{len}\ \nu > 0$, which leads to strengthening $\liquidass[P_0]$,
while the latter is retained in $\constraints$.
If further type checking produces a subtyping constraint on $\beta'$, say $\T{Nat}\Subt\beta'$,
$\typeass$ will be extended with an assignment $[\beta'\to \{\T{Int}\mid P_1\}]$,
which in turn will lead to transforming the constraint on  $\gamma'$ into $\vdash\gamma' \Subt \{\T{Int}\mid P_1\}$
and instantiating $[\gamma'\to \{\T{Int}\mid P_2\}]$,
at which point all free type variables have been eliminated.
\end{example*}

\subsection{Solving Horn Clauses}\label{sec:theory:horn}

The set of Horn constraints $\horns$ produced by $\solve$ in \autoref{fig:solve}
consists of implications of the form $\psi\Implies\psi'$,
where each side is a conjunction of a known formula and zero or more predicate unknowns $P$.
The goal of the procedure $\horn$ is to find a liquid assignment to $P$ that validates all constraints in $\horns$
or determine that $\horns$ is unsatisfiable.
The space of possible valuations of each $P$ is $2^{\quals_P}$,
where $\quals_P$ is a set of atomic formulas
obtained by instantiating qualifiers $\quals$ in the environment where $P$ was created.

Local liquid type checking invokes $\horn$ after every new Horn constraint is issued,
and expects to detect an unsatisfiable set of constraints---and thus a type error---as early as possible.
Round-trip typing rules---%
in particular, \textsc{AppFO} and \textsc{If-Abd}---%
produce constraints in a specific order,
such that for each unknown $P$, 
implications where $P$ appears negatively (on the left) are issued before the ones where it appears positively (on the right).
To enable early error detection in this setting,
procedure $\horn$ looks for the \emph{weakest} valuation of each $P$ that validates all Horn constraints issued so far,
and deems $\horns$ unsatisfiable if for some $P$ such a valuation does not exist \emph{or is inconsistent}
(an inconsistent valuation can be safely discarded since it is guaranteed to violate some future constraint where $P$ appears positively).

As an optimization, $\horn$ always starts form the current assignment $\liquidass$ and possibly makes it stronger, 
since all weaker assignments are known to be too weak to satisfy the previously issued constraints
(for a fresh $P$, $\liquidass[P]$ is initialized with $\emptyset$). 
$\horn$ uses an iterative greatest-fixpoint computation, outlined in \autoref{fig:solve};
in every iteration, $\strengthen(\liquidass,\psi\Implies\psi')$ produces the weakest consistent assignment $\liquidass'$ 
strictly stronger than $\liquidass$,
such that $\valuation{\psi}{\liquidass'}\Implies \valuation{\psi'}{\liquidass}$ is valid
(or fails if this is not possible).
In general, $\liquidass'$ is not unique;
in this case our algorithm simply explores all alternatives independently,
which happens rarely enough in the context of refinement type checking and synthesis.

Implementing $\strengthen$ efficiently is challenging:
for every unknown $P$ in $\psi$,
the algorithm has to find the smallest subset of atomic predicates from 
$\quals_{P}\setminus\liquidass[P]$ that validates the implication.
Existing greatest-fixpoint Horn solvers~\cite{SrivastavaGu09} use breadth-first search,
which is exponential in the cumulative size of $\quals_P$
and does not scale sufficiently well to practical cases of condition abduction (see \autoref{sec:evaluation}).
Instead, we observe that this task is similar to the problem of finding minimal unsatisfiable subsets (MUSs) of a a set of formulas;
based on this observation, we build a practical algorithm for $\strengthen$ which we dub \textsc{MUSFix}.

The task of $\strengthen$ amounts to finding all MUSs of the set
$\bigcup_{\kappa\in\psi}(\quals_{\kappa}\setminus\liquidass[\kappa]) \cup \{\neg\valuation{\psi'}{\liquidass}\}$
under the assumption $\valuation{\psi}{\liquidass}$.
\textsc{MUSFix} borrows the main insight of the \textsc{Marco} algorithm~\cite{LiffitonPrMaMa15} for MUS enumeration,
which relies on the ability of the SMT solver to produce unsatisfiable cores from proofs.
We modify \textsc{Marco} to only produce MUSs 
that contain the negated right-hand side of the Horn constraint, $\neg\valuation{\psi'}{\liquidass}$,
since $\horn$ should only produce consistent solutions.
For each resulting MUS (stripped of $\neg\valuation{\psi'}{\liquidass}$),
\textsc{MUSFix} finds all possible partitions into valuations of individual predicate unknowns.
Since MUSes are normally much smaller than the original set of formulas, 
a straightforward partitioning algorithm works well and rarely yield more than one valid partition.
As an important optimization,
when MUS enumeration returns multiple syntactically minimal subsets,
\textsc{MUSFix} prunes out \emph{semantically} redundant subsets,
\ie it removes a subset $m_i$ if $\bigwedge m_i \Implies \bigwedge m_j$ for some $j \neq i$.

\subsection{Synthesis from Refinement Types}\label{sec:theory:synthesis}

From the rules of round-trip type checking we can obtain synthesis rules, following the approach of~\cite{OseraZd15}
and reinterpreting the checking and strengthening judgments
in such a way that the term $t$ is considered unknown.
This interpretation yields a synthesis procedure,
which, given a goal schema $S$, picks a rule where the goal schema in the conclusion matches $S$,
and constructs the term $t$ from subterms obtained from the rule's premises.
More concretely, starting from the top-level goal schema $S$,
the algorithm always starts by applying rule \textsc{Fix} (if $S^\before$ is defined)
followed by \textsc{TAbs} (if the schema is polymorphic),
and finally \textsc{Abs} (if the goal type is a function type).
Given a scalar goal,
the procedure performs exhaustive enumeration of well-typed E-terms up to a given bound on their depth,
solving subtyping constraints at every node and
simultaneously abducing a path condition as per the \textsc{If-Abd} rule.
If the resulting condition $\psi$ is trivially true, the algorithm has found a solution;
if $\psi$ is inconsistent, the E-term is discarded;
otherwise, the algorithm generates a conditional 
and proceeds to synthesize its remaining branch under the fixed assumption $\neg\psi$,
as well a term of type $\{\T{Bool}\mid \nu=\psi\}$ to be used as the branch guard.
Once all possible E-terms are exhausted,
the algorithm attempts to synthesize a pattern match using an arbitrary E-term as a scrutinee,
unless the maximal nesting depth of matches has been reached.

\custompar{Soundness and Completeness.}
Soundness of synthesis follows straightforwardly from soundness of round-trip type checking, 
since each program candidate is constructed together with a typing derivation in the round-trip system.
Completeness is less obvious:
due to condition abduction, the synthesis procedure only explores programs
where the left branch of each conditional is an E-term.
We can show that every \lang program can be rewritten to have this form
(by flattening nested conditionals and pushing conditionals inside matches).
Thus the synthesis procedure is complete in the following sense: 
for each schema $S$, if there exists a term $t$,
such that the depth of applications and pattern matches in $t$ are within the given bounds,
the procedure is guarantees to find some term $t'$ that also type-checks against $S$;
if such a term $t$ does not exist, the procedure will terminate with a failure.
Note that the algorithm imposes no a-priori bound on the nesting depth of conditionals
(which is crucial for completeness as stated above);
this does not preclude termination,
since in any given environment, liquid formulas partition the input space into finitely many parts,
and every condition abduction is guaranteed to cover a nonempty subset of these parts.

\section{Evaluation}\label{sec:evaluation}

\begin{table*}[!htbp]
\begin{center}
\scriptsize
\resizebox{\textwidth}{!}{
\begin{tabular}{@{} r|c| cccc | cccccc @{}}
\head{Group} & \head{Description}  & \head{\#G} & \head{Components} & \head{\#M} & \head{Spec} & \head{Code} & \head{T-all} & \head{T-def} & \head{T-nrt} & \head{T-ncc} & \head{T-nmus} \\	

\hhline{============}

\multirow{20}{*}{\parbox{1cm}{\vspace{-0.85\baselineskip}\center{List}}} & is empty & 1 & true, false & 1 & 6 & 6 & 0.02 & 0.02 & 0.02 & 0.02 & 0.01 \\
 & is member & 1 & true, false, $=$, $\neq$ & 2 & 6 & 18 & 0.11 & 0.11 & 0.13 & 0.10 & - \\
 & duplicate each element & 1 &  & 1 & 7 & 16 & 0.05 & 0.05 & - & 0.08 & 0.04 \\
 & replicate & 1 & 0, inc, dec, $\leq$, $\neq$ & 1 & 4 & 21 & 0.05 & 0.05 & 9.63 & 0.05 & - \\
 & append two lists & 1 &  & 1 & 8 & 15 & 0.15 & 0.09 & - & 0.13 & 0.10 \\
 & concatenate list of lists & 1 & append & 3 & 5 & 12 & 0.05 & 0.05 & 0.22 & 0.04 & 0.04 \\
 & take first $n$ elements & 1 & 0, inc, dec, $\leq$, $\neq$ & 1 & 8 & 27 & 0.12 & 0.12 & 55.82 & 0.12 & - \\
 & drop first $n$ elements & 1 & 0, inc, dec, $\leq$, $\neq$ & 1 & 11 & 20 & 0.10 & 0.10 & 7.87 & 0.09 & - \\
 & delete value & 1 & $=$, $\neq$ & 2 & 8 & 26 & 0.10 & 0.10 & 0.17 & 0.12 & - \\
 & map & 1 &  & 1 & 5 & 22 & 0.03 & 0.03 & 0.06 & 0.03 & 0.02 \\
 & zip & 1 &  & 1 & 10 & 22 & 0.08 & 0.08 & - & 0.10 & 0.07 \\
 & zip with function & 1 &  & 1 & 10 & 33 & 0.07 & 0.07 & - & 0.17 & 0.06 \\
 & cartesian product & 1 & append, map & 3 & 8 & 26 & 0.30 & 0.29 & 5.83 & 0.25 & 0.23 \\
 & $i$-th element & 1 & 0, inc, dec, $\leq$, $\neq$ & 1 & 12 & 20 & 0.05 & 0.05 & 0.38 & 0.05 & - \\
 & index of element & 1 & 0, inc, dec, $=$, $\neq$ & 2 & 8 & 20 & 0.08 & 0.08 & 0.14 & 0.07 & - \\
 & insert at end & 1 &  & 2 & 21 & 19 & 0.10 & 0.10 & 0.24 & 0.11 & 0.12 \\
 & reverse & 1 & insert at end & 2 & 15 & 12 & 0.09 & 0.10 & 0.29 & 0.12 & 0.09 \\
 & foldr & 1 &  & 2 & 14 & 32 & 0.10 & 0.10 & - & 0.10 & 0.44 \\
 & length using fold & 1 & 0, inc, dec & 2 & 4 & 17 & 0.03 & 0.07 & 0.03 & 0.03 & 0.02 \\
 & append using fold & 1 &  & 2 & 8 & 20 & 0.04 & 2.19 & 0.05 & 0.04 & 0.03 \\
\hline\multirow{5}{*}{\parbox{1cm}{\vspace{-0.85\baselineskip}\center{Unique list}}} & insert & 1 & $=$, $\neq$ & 2 & 8 & 26 & 0.27 & 0.22 & 0.85 & 0.20 & - \\
 & delete & 1 & $=$, $\neq$ & 2 & 8 & 22 & 0.18 & 0.19 & 1.07 & 0.26 & - \\
 & remove duplicates & 2 & is member & 2 & 13 & 47 & 0.36 & 0.87 & 0.72 & 0.33 & - \\
 & remove adjacent dupl. & 1 & $=$, $\neq$ & 3 & 5 & 32 & 1.33 & 1.32 & - & 1.31 & - \\
 & integer range & 1 & 0, inc, dec, $\leq$, $\neq$ & 2 & 13 & 23 & 2.36 & 2.33 & 22.27 & 2.33 & - \\
\hline\multirow{3}{*}{\parbox{1cm}{\vspace{-0.85\baselineskip}\center{Strictly sorted list}}} & insert & 1 & $<$ & 2 & 8 & 41 & 0.18 & 0.17 & 0.43 & 0.16 & - \\
 & delete & 1 & $<$ & 2 & 8 & 29 & 0.10 & 0.09 & 0.21 & 0.10 & - \\
 & intersect & 1 & $<$ & 2 & 8 & 40 & 0.33 & 0.32 & 0.68 & 0.34 & - \\
\hline\multirow{11}{*}{\parbox{1cm}{\vspace{-0.85\baselineskip}\center{Sorting}}} & insert (sorted) & 1 & $\leq$, $\neq$ & 2 & 8 & 34 & 0.25 & 0.24 & 0.68 & 0.23 & - \\
 & insertion sort & 1 & insert (sorted) & 4 & 5 & 12 & 0.06 & 0.06 & 0.20 & 0.06 & 0.05 \\
 & sort by folding & 1 & foldr, $\leq$, $\neq$ & 2 & 11 & 47 & 2.14 & - & - & 2.21 & - \\
 & extract minimum & 1 & $\leq$, $\neq$ & 4 & 23 & 40 & 4.28 & 4.35 & - & 7.58 & - \\
 & selection sort & 1 & extract minimum & 6 & 5 & 16 & 0.49 & 0.44 & - & 0.42 & 0.38 \\
 & balanced split & 1 &  & 4 & 31 & 33 & 0.96 & 0.51 & - & 1.40 & 0.80 \\
 & merge & 1 & $\leq$, $\neq$ & 2 & 17 & 41 & 2.19 & 14.61 & - & 6.85 & - \\
 & merge sort & 1 & split, merge & 6 & 11 & 25 & 2.10 & 2.10 & - & 2.52 & 1.69 \\
 & partition & 1 & $\leq$ & 4 & 27 & 40 & 2.84 & 7.89 & - & 3.42 & - \\
 & append with pivot & 1 &  & 2 & 28 & 22 & 0.22 & 0.15 & 0.58 & 0.22 & 0.19 \\
 & quick sort & 1 & partition, append w/pivot & 6 & 11 & 22 & 2.71 & 18.45 & - & 2.49 & 4.94 \\
\hline\multirow{4}{*}{\parbox{1cm}{\vspace{-0.85\baselineskip}\center{Tree}}} & is member & 1 & false, not, or, $=$ & 2 & 6 & 28 & 0.29 & 0.29 & 7.90 & 0.28 & - \\
 & node count & 1 & 0, 1, + & 1 & 4 & 18 & 0.20 & 0.20 & - & 0.91 & 0.14 \\
 & preorder & 1 & append & 2 & 5 & 18 & 0.21 & 0.20 & - & 0.91 & 0.15 \\
 & create balanced & 1 & 0, inc, dec, $\leq$, $\neq$ & 2 & 7 & 29 & 0.14 & 0.15 & - & 0.21 & - \\
\hline\multirow{4}{*}{\parbox{1cm}{\vspace{-0.85\baselineskip}\center{BST}}} & is member & 1 & true, false, $\leq$, $\neq$ & 2 & 6 & 37 & 0.09 & 0.08 & 0.10 & 0.08 & - \\
 & insert & 1 & $\leq$, $\neq$ & 2 & 8 & 55 & 0.91 & 0.88 & - & 0.82 & - \\
 & delete & 1 & $\leq$, $\neq$ & 2 & 8 & 68 & 5.68 & 5.62 & - & 10.74 & - \\
 & BST sort & 5 & $\leq$, $\neq$ & 6 & 51 & 115 & 1.38 & 1.35 & - & 1.25 & - \\
\hline\multirow{5}{*}{\parbox{1cm}{\vspace{-0.85\baselineskip}\center{Binary Heap}}} & is member & 1 & false, not, or, $\leq$, $\neq$ & 2 & 6 & 43 & 0.38 & 0.38 & 9.63 & 0.35 & - \\
 & insert & 1 & $\leq$, $\neq$ & 2 & 8 & 55 & 0.51 & 0.50 & 8.83 & 0.48 & - \\
 & 1-element constructor & 1 & $\leq$, $\neq$ & 2 & 5 & 8 & 0.02 & 0.02 & 0.02 & 0.02 & 0.02 \\
 & 2-element constructor & 1 & $\leq$, $\neq$ & 2 & 6 & 55 & 0.08 & 0.08 & 0.25 & 0.07 & - \\
 & 3-element constructor & 1 & $\leq$, $\neq$ & 2 & 7 & 246 & 2.10 & 2.12 & - & 1.98 & - \\
\hline\multirow{6}{*}{\parbox{1cm}{\vspace{-0.85\baselineskip}\center{AVL}}} & rotate left & 3 & inc & 3 & 104 & 91 & 11.08 & 12.43 & - & 17.06 & 10.08 \\
 & rotate right & 3 & inc & 3 & 107 & 91 & 19.23 & 18.34 & - & 36.35 & 17.87 \\
 & balance & 1 & rotate, nodeHeight, isSkewed, isLHeavy, isRHeavy & 4 & 31 & 119 & 1.56 & - & - & 1.76 & - \\
 & insert & 1 & balance, $<$ & 3 & 22 & 47 & 1.84 & 1.81 & - & 1.64 & - \\
 & extract minimum & 1 & $<$ & 5 & 11 & 25 & 1.92 & 1.87 & - & 1.72 & - \\
 & delete & 2 & extract minimum, balance, $<$ & 5 & 37 & 63 & 15.67 & - & - & 13.79 & - \\
\hline\multirow{3}{*}{\parbox{1cm}{\vspace{-0.85\baselineskip}\center{RBT}}} & balance left & 2 &  & 9 & 143 & 137 & 5.62 & 5.53 & - & 48.47 & - \\
 & balance right & 2 &  & 9 & 144 & 137 & 7.63 & 7.72 & - & 45.32 & - \\
 & insert & 3 & balance left, right, $\leq$, $\neq$ & 9 & 49 & 112 & 8.95 & 8.53 & - & 7.93 & - \\
\hline\multirow{3}{*}{\parbox{1cm}{\vspace{-0.85\baselineskip}\center{User}}} & desugar AST & 1 & 0, 1, 2 & 4 & 5 & 46 & 1.17 & 1.10 & - & 1.23 & 0.78 \\
 & make address book & 1 & is private & 3 & 5 & 35 & 0.62 & 3.67 & - & 0.94 & 0.55 \\
 & merge address books & 1 & append & 3 & 8 & 19 & 0.35 & 5.85 & - & 0.31 & 0.24 \\
\hline
\end{tabular}
}
\end{center}
\caption{
Benchmarks and \tool results.
For each benchmark, we report the number of synthesis goals \head{\#G};
the set of provided \head{Components};
the number of defined measures \head{\#M};
cumulative size of \head{Spec}ification and synthesized \head{Code} (in AST nodes) for all goals;
as well as \tool running times (in seconds) with minimal bounds (\head{T-all}),
with default bounds (\head{T-def}),
without round-trip checking (\head{T-nrt}),
without type consistency checking (\head{T-ncc}),
and without \textsc{MUSFix} (\head{T-nmus}).
``-'' denotes timeout of 2 minutes or out of memory.
}
\label{fig:evaluation}
\end{table*}

We performed an extensive experimental evaluation of \tool
with the goal of assessing usability and scalability of the proposed synthesis technique
compared to existing alternatives.
This goal materializes into the following research questions:
\begin{enumerate}[(1)]
\item Are refinement types supported by \tool \emph{expressive} enough to specify interesting programs, including benchmarks proposed in prior work?
\item How \emph{concise} are \tool's input specifications compared both to the synthesized solutions and to inputs required by existing techniques?
\item Are \tool's inputs \emph{intuitive}: in particular, is the algorithm applicable to specifications not tailored for synthesis?
\item How \emph{scalable} is \tool: can it handle benchmarks tackled by existing synthesizers?
Can it scale to more complex programs than those previously reported in the literature?
\item How is synthesis performance impacted by various features of \tool and its type system?
\end{enumerate}

\subsection{Benchmarks}\label{sec:eval:benchmarks}

In order to answer the research questions stated above, we arranged a benchmark suite that consists of \exCount synthesis challenges
from various sources, representing a range of problem domains.
In the interest of direct comparison with existing synthesis tools, 
our suite includes benchmarks that had been used in the evaluation of those tools%
~\cite{KneussKuKuSu13,LeinoMi12,OseraZd15,FrankleOWZ16,FeserChDi15,AlbarghouthiGuKi13,MilicevicNKJ15,AlurCR15}. 
From each of these papers, we picked top three most complex challenges (judging by the reported synthesis times) that were expressible in \tool's refinement logic,
plus several easier problems that were common or particularly interesting. 

Our second source of benchmarks are verification case studies from the LiquidHaskell tutorial~\cite{LiquidHaskellTutorial}.
The purpose of this second category is two-fold:
first, these problems are larger and more complex than existing synthesis benchmarks,
and thus can show whether \tool goes beyond the state of the art in synthesis;
second, the specifications for these problems have been written by independent researchers and for a different purpose,
and thus can serve as evidence that input accepted by \tool is sufficiently general and intuitive.
Out of the total of 14 case studies, we picked 5 that came with sufficiently strong functional specifications
(list sorting, binary-search trees, content-aware lists, unique lists, and AVL trees),
erased all implementations,
and made relatively straightforward syntactic changes in order to obtain valid \tool input.

\autoref{fig:evaluation} lists the \exCount benchmarks together with  
some metrics of our type-based specifications: 
the number of synthesis goals including auxiliary functions,
the set of components provided,
the number of measures used,
and the cumulative size of refinements.
Note that the reported specification size only includes refinements in the signatures of the synthesis goals;
refinements in component functions are excluded since every such function (except trivial arithmetic operations and helper functions) serves as a synthesis goal in another benchmark;
refinements in datatype definitions are also excluded, since those definitions are reusable between all benchmarks in the same problem domain.
Full specifications are available from the \tool repository~\cite{SynquidRepo}.

The benchmarks are drawn from a variety of problem domains with the goal of exercising different features in \tool.
List and tree benchmarks demonstrate pattern matching, structural recursion, 
the ability to generate and use polymorphic and higher-order functions (such as \T{map} and \T{fold}), 
as well as reasoning about nontrivial properties of data structures,
both universal (e.g.\ all elements are non-negative) and recursive (e.g.\ size and set of elements).
Our most advanced benchmarks include sorting and operations over data structures with complex representation invariants,
such as binary search trees, heaps, and balanced trees.
These benchmarks showcase expressiveness of refinement types,
exercise \tool's ability to perform nontrivial reasoning through refinement discovery,
and represent a scalability challenge beyond the current state of the art in synthesis.
Finally, we included several benchmarks operating on ``custom'' datatypes
(including the ``address book'' case study from~\cite{KneussKuKuSu13})
in order to demonstrate that \tool's applicability is not limited to standard textbook examples.

\subsection{Results}\label{sec:eval:individual}

Evaluation results are summarized in \autoref{fig:evaluation}.
\tool was able to synthesize (and fully verify) solutions for all \exCount benchmarks;
the table lists sizes of these solutions in AST nodes (\emph{Code}) 
as well as synthesis times in seconds (\emph{T-all}).

The results demonstrate that \tool is efficient in synthesizing a variety of programs:
all but 7 benchmarks are synthesized within 5 seconds; 
it also scales to programs of nontrivial size, 
including complex recursive (red-black tree insertion of size 69) 
and non-recursive functions (3-value binary heap constructor of size 246).
Even though specification sizes for some benchmarks are comparable with the size of the synthesized code, 
for many complex problems the benefits of describing computations as refinement types are significant:
for example, the type-based specifications of the three main operations on binary-search trees are over six times more concise than their implementations.

The synthesis times discussed above were obtained for optimal exploration bounds, which could differ across benchmarks.
\autoref{fig:evaluation} also reports synthesis times \emph{T-def} for a setting
where all benchmarks in the same category share the same exploration bounds. 
Although this inevitably slows down synthesis, on most of the benchmarks the performance penalties were not drastic: 
only three benchmarks failed to terminate within the two-minute timeout.

In order to assess the impact on performance of various aspects of our algorithm and implementation,
\autoref{fig:evaluation} reports synthesis times using three variants of \tool, where certain features were disabled:
the column \emph{T-nrt} corresponds to replacing round-trip type checking with bidirectional type checking
(that is, disabling subtyping checks for partial applications);
\emph{T-ncc} corresponds to disabling type consistency checks;
\emph{T-nmus} corresponds to replacing \textsc{MUSFix} with naive breadth-first search.
The results demonstrate that the most significant contribution comes from using \textsc{MUSFix}:
without this feature 37 out of \exCount benchmarks time out,
since breadth-first search cannot handle condition abduction even with a moderate number of logical qualifiers.
The second most significant feature is round-trip type checking, with 33 benchmarks timing out when disabled,
while consistency checks only bring significant speedups for the most complex examples.

\subsection{Comparative Evaluation}\label{sec:eval:comparative}

\begin{table}
\begin{center}

\resizebox{\columnwidth}{!}{
\begin{tabular}{@{} r | c | cccc @{}}
 & \head{Benchmark} & \head{Spec} & \head{SpecS} & \head{Time} & \head{TimeS} \\

\hhline{======}

\multirow{3}{*}{\rotatebox{90}{\textsc{Leon}}}&strict sorted list delete&14&8&15.1&0.10 \\
&strict sorted list insert&14&8&14.1&0.18 \\
&merge sort&9&11&14.3&2.1 \\
\hline \multirow{3}{*}{\rotatebox{90}{\textsc{Jen}}}&BST find&51&6&64.8&0.09 \\
&bin. heap 1-element&80&5&61.6&0.02 \\
&bin. heap find&76&6&51.9&0.38 \\
\hhline{======} \multirow{3}{*}{\rotatebox{90}{\textsc{Myth}}}&sorted list insert&12&8&0.12&0.25 \\
&list rm adjacent dupl.&13&5&0.07&1.33 \\
&BST insert&20&8&0.37&0.91 \\
\hline \multirow{3}{*}{\rotatebox{90}{$\lambda^2$}}&list remove duplicates&7&13&231&0.36 \\
&list drop&6&11&316.4&0.1 \\
&tree find&12&6&4.7&0.29 \\
\hline \multirow{3}{*}{\rotatebox{90}{\textsc{Esc}}}&list rm adjacent dupl.&n/a&5&1&1.33 \\
&tree create balanced&n/a&7&0.24&0.14 \\
&list duplicate each&n/a&7&0.16&0.05 \\
\hline \multirow{3}{*}{\rotatebox{90}{\textsc{Myth2}}}&BST insert&n/a&8&1.81&0.91 \\
&sorted list insert&n/a&8&1.02&0.25 \\
&tree count nodes&n/a&4&0.45&0.20 \\

\end{tabular}
}

\end{center}
\caption{
\small{
Comparison to other synthesizers.
For each benchmark we report:
\head{Spec}, specification size (or the number of input-output examples) for respective tool;
\head{SpecS}, specification size for \tool (from \autoref{fig:evaluation});
\head{Time}, reported running time for respective tool;
\head{TimeS}, running time for \tool (from \autoref{fig:evaluation}).
}
}
\label{fig:comparative}
\end{table}

We compared \tool with state-of-the-art synthesis tools that target recursive functional programs
and offer a comparable level of automation.
The results are summarized in \autoref{fig:comparative}.
For each tool, we list the three most complex benchmarks reported in the respective paper that were expressible in \tool's refinement logic;
for each of the three benchmarks we report the specification size (if available) and the synthesis time;
for ease of comparison, we repeat the same two metrics for \tool (copied over from \autoref{fig:evaluation}).
Note that the synthesis times are not directly comparable, 
since the results for other tools are taken from respective papers
and were obtained on different hardware;
however, the differences of an order of magnitude or more are still significant, 
since they can hardly be explained by improvements in single-core hardware performance.

We split the tools into two categories according to the specification and verification mechanism they rely on.

\begin{figure*}[!ht]
\centering
\resizebox{!}{6cm}{
\includegraphics{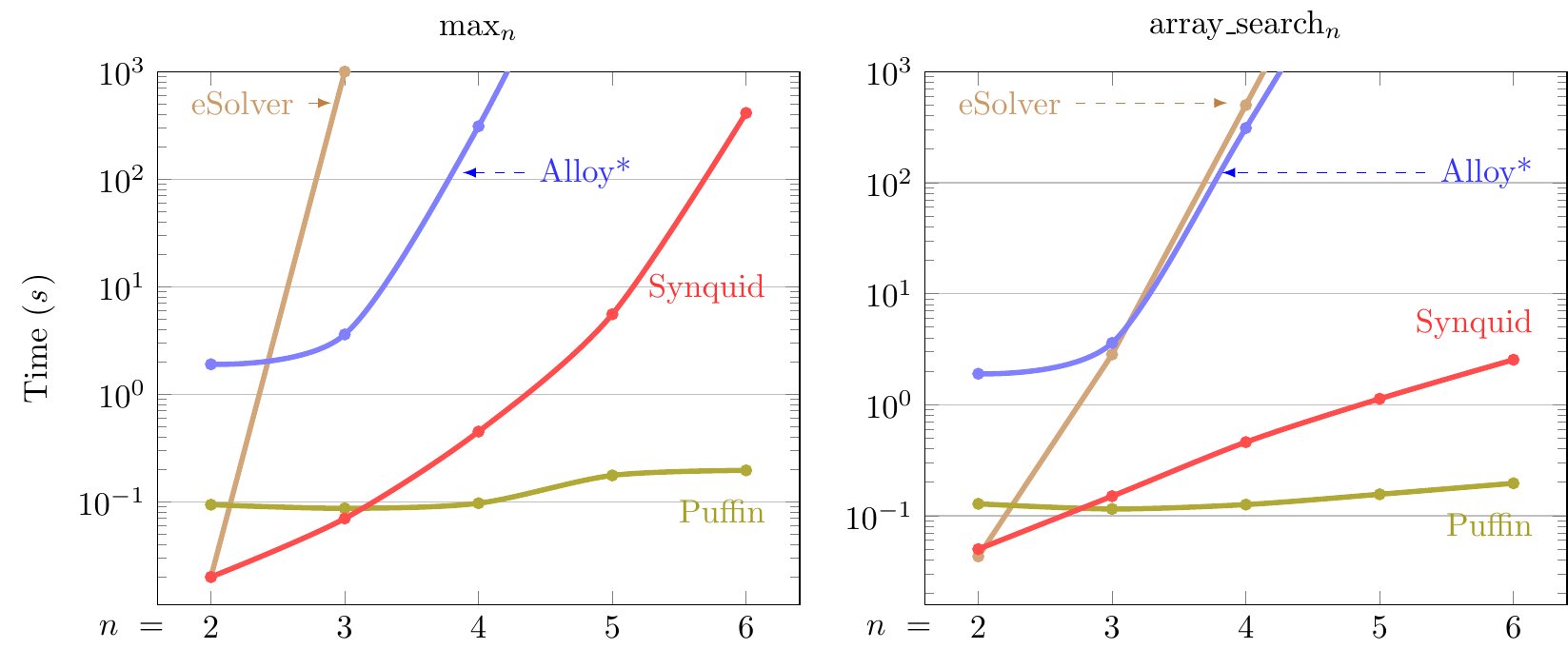}
}
\caption{Evaluation on non-recursive benchmarks.}
\label{fig:arithmetic_graphs}
\end{figure*}

\custompar{Formal Specifications with Deductive Verification.}
The first category includes \textsc{Leon}~\cite{KneussKuKuSu13} and \textsc{Jennisys}~\cite{LeinoMi12};
both tools use pre- and post-conditions (and data structure invariants) to describe computations,
and rely on unbounded, SMT-based verification to validate candidate programs
(and thus provide the same correctness guarantees as \tool).
Unlike \textsc{Leon} and \tool, \textsc{Jennisys} targets imperative, heap-based programs;
the evaluation in~\cite{LeinoMi12}, however, focuses on side-effect free benchmarks.
Both tools use variants of condition abduction, which makes their exploration strategies similar to \tool's.

For both tools, translating their three most complex benchmarks into \tool proved to be straightforward.
This suggests that our decidable refinement logic is not too limiting in practice,
compared to other formal specification languages used for synthesis.
Our specifications are on average slightly more concise than \textsc{Leon}'s and significantly more concise than those in \textsc{Jennisys};
the latter is largely due to the heap-based language,
but the results still indicate that embedding predicates into types can help curb the verboseness of traditional Hoare-style specifications. 

\tool was able to synthesize solutions to all problems tackled by the other two tools in this category.
The converse is not true: automatic verification of some of \tool's benchmarks (such as the binary-search tree example in \autoref{sec:overview}) 
requires invariant discovery, which is not supported by the other two tools. 
This suggests that \tool \emph{qualitatively} differs from other state-of-the-art synthesizers in terms of the class of programs for which a verified solution can be synthesized.
On the benchmarks where the other tools are applicable, \tool demonstrates considerably smaller running times,
which suggests that fast verification and early pruning enabled by type-based specifications indeed improve the scalability of synthesis. 

\custompar{Input/Output Examples.}
Our second category of tools includes \textsc{Myth}~\cite{OseraZd15}, \textsc{$\lambda^2$}~\cite{FeserChDi15} and \textsc{Escher}~\cite{AlbarghouthiGuKi13},
which synthesize programs from concrete input-output examples,
as well as \textsc{Myth2}~\cite{FrankleOWZ16}, which uses generalized input-output examples.
Using refinement types, we were able to express 3 out of 3, 10, 5, and 7 of their most complex benchmarks, receptively.
The functions we failed to specify either manipulate nested structures in a representation-specific way (such as ``insert a tree under each leaf of another tree''),
or perform filtering (``list of nodes in a tree that match a predicate'').

At the same time, we found cases where refinement types are concise and intuitive, 
while providing input-output examples is extremely tedious.
One of those cases is insertion into a binary search tree:
\textsc{Myth} requires 20 examples,
each of which contains two bulky tree instances and has to define the precise position where the new element is to be inserted;
the type-based specification for this problem, given in \autoref{sec:overview}, 
is straightforward and only defines the abstract effect of the operation relevant to the user. 
This suggests that in general, logic-based specification techniques, including refinement types in \tool, 
are a better fit for describing operations that maintain a complex representation invariant but have a simple abstract effect,
while example-based approaches fare better when describing operations that inherently expose the complex representation of a data structure.

Experiments with example-based tools only report the number of examples required for synthesis and not their sizes;
however, we can safely assume that each example contains multiple AST nodes,
and thus conclude that type-based specifications for the benchmarks in \autoref{fig:comparative} are more concise.
By imposing more constraints on the set of examples (such as \emph{trace completeness}~\cite{OseraZd15}) and increasing its size,
example-based synthesizers can trade off user effort for synthesis time.
On the benchmarks under comparison, \textsc{Myth} appears to favor performance, while $\lambda^2$ prefers smaller example sets.
\tool tries to offer the best of both world and achieves good performance with concise specifications.

\subsection{Evaluation on Non-recursive Benchmarks}

In order to asses the scalability of \textsc{MUSFix} on larger search spaces, 
we evaluated \tool on two parametrized benchmarks from the linear integer arithmetic track of the SyGuS'14 competition~\cite{AlurBJMRSSSTU13}: 
$\T{max}_n$ (find maximum of $n$ integer arguments)
and $\T{array_search}_n$ (find the position of a value in a sorted array with $n$ elements).
Both benchmarks target non-recursive programs that consist of a series of nested conditionals;
moreover, the search space for the branch guards grows exponentially with $n$.
This makes the two problems ideal benchmarks for condition abduction techniques.

\autoref{fig:arithmetic_graphs} shows \tool synthesis times on the two benchmarks for $n = 2,3,\ldots,6$.
For reference, we also plot the results for the enumerative solver (the fastest of the SyGuS baseline solvers),
as well as the higher-order solver \textsc{Alloy*}~\cite{MilicevicNKJ15},
and \textsc{Puffin}~\cite{AlurCR15}, a specialized synthesizer for conditional integer-arithmetic expressions%
\footnote{The results for these tools are taken from their respective papers;
only differences in the order of magnitude are significant.}.
The results show that \tool's condition abduction scales relatively well compared to general synthesizers,
but loses to \textsc{Puffin}'s theory-specific abduction engine.

\section{Related Work}\label{sec:related}

Our work is the first to leverage general decidable refinement types for synthesis, but it builds on a number of ideas
from prior work as has been highlighted already throughout the paper. Specifically, our work combines ideas from 
two areas: synthesis of recursive functional programs and refinement type inference.

\custompar{Synthesis of Recursive Functional Programs.}
A number of recent systems target recursive functional programs and use type information in some form to restrict the search space. 
The most closely related to our work are \textsc{Myth}~\cite{OseraZd15}, \textsc{Myth2}~\cite{FrankleOWZ16}, and \textsc{Leon}~\cite{KneussKuKuSu13}.

\textsc{Myth} pioneered the idea of leveraging bidirectional type checking for synthesis. 
However, \textsc{Myth} does not support polymorphism or refinement types. 
Instead, the system relies on examples in order to specify the desired functionality. 
For certain functions, providing examples is easy
whereas writing a refinement type is cumbersome or, due to the limitations of decidable refinement logic, even impossible.
That said, examples in general do not fully specify a program;
thus programming by example always involves a manual verification step. 
Moreover, for some less intuitive problems, such as insertion into a balanced tree or AST transformations,
providing input-output examples requires familiarity with all details and corner cases of the algorithm,
whereas refinement types enable a more abstract specification.
Additionally, \textsc{Myth} expects the set of examples to be \emph{trace complete}, 
which means that for any example the user provides, there should also be examples corresponding to any recursive calls made on that input. 
Other systems that use a combination of types and input-output examples,
and thus have similar advantages and disadvantages relative to our system,
include $\lambda^2$~\cite{FeserChDi15} and \textsc{Escher}~\cite{AlbarghouthiGuKi13}.

\textsc{Myth2} generalizes example-based synthesis: 
it treats examples as singleton types, and extends the input language with intersection and union types, as well as parametric polymorphism.
This addresses some of the shortcomings of concrete input-output examples (in particular, their verboseness),
however, in the absence of dependent function types most interesting programs still cannot be fully speficied.
Combining \lang's dependent types with singletons, intersection, and unions found in \textsc{Myth2}
is an interesting direction for future work.

In \textsc{Leon}, synthesis problems are defined by first-order specifications with recursive predicates,
and verification is based on a semi-decision procedure~\cite{SuterKoKu11}, implemented on top of an SMT solver.
\textsc{Leon}'s verification engine does not support invariant inference,
which prevents it from generating provably correct implementations for problems such as insertion into a sorted list or a binary search tree.
The general synthesis strategy is similar to ours:
first decompose the specification 
and then switch to generate-and-check mode, enhanced with condition abduction. 
Unlike our system, \textsc{Leon} does not perform systematic specification decomposition in the generate-and-check mode,
and lacks support for polymorphism and high-order functions.

The use of type information has also proved extremely useful for code completion~\cite{MandelinXuBoKi05,PerelmanGuBaGr12,GveroKuKuPi13}, 
although none of these systems rely on a type system as expressive as ours, 
and they are designed for a very different set of tradeoffs compared to \tool. 
For example, because the problem is highly under-constrained, these systems place significant emphasis on the ranking of solutions.

Another important body of related work related is \emph{hole driven development}, 
as embodied in systems like Agda~\cite{Norell09} and Idris~\cite{Brady13}, 
which leverage a rich type system to aid development, 
but are meant to be used interactively rather than to perform complete synthesis. 
Djinn~\cite{Augustsson14} serves a similar purpose but uses the less expressive Haskell type system.

\custompar{Refinement Type Checking.}
Our type checking algorithm is based on liquid type inference~\cite{RondonKaJh08,KawaguchiRoJh09,VazouRoJh13,VazouSeJh14,VazouSeJhVyJo14},
which pioneered combining Hindley-Miler unification with predicate abstraction.
We integrate their ideas with bidirectional type checking~\cite{PierceTu00}, which has been used before both 
for other flavors of refinement types~\cite{XiPf99,DaviesPf00,DunfieldPf04} and for unrestricted dependent types~\cite{Coquand96},
but not for general decidable refinement types.
Another difference with liquid types is that we use greatest-fixpoint predicate abstraction procedure inspired by~\cite{SrivastavaGu09},
and improved using an algorithm for efficient MUS enumeration~\cite{LiffitonPrMaMa15}.

\custompar{Logical Abduction.}
The concept of abduction in logical reasoning has found numerous applications in programming languages,
including specification inference~\cite{DilligDLM13,AlbarghouthiDG16} and program synthesis~\cite{KneussKuKuSu13,LeinoMi12,AlurCR15}.
Abduction techniques based on quantifier elimination~\cite{DilligDLM13,AlbarghouthiDG16}
and theory-specific unification operators~\cite{AlurCR15},
are precise and efficient, but only applicable to restricted domains.
\tool performs abduction using predicate abstraction and MUS enumeration,
which can be applied to a wider range of specification logics,
but its precision is limited to the given set of logical qualifiers.

\section*{Acknowledgements}
We thank Aleksandar Milicevic, Shachar Itzhaky, Ranjit Jhala, and the anonymous reviewers for their valuable input. 
This work was funded by DARPA Grant \textsl{FA8750-14-2-0242} (MUSE) and NSF Grants \textsl{CCF-1139056} and \textsl{CCF-1438969}.

\bibliographystyle{abbrvnat}
\bibliography{synquid}

\iflong
\appendix
\section{Soundness and Completeness of Round-trip Type Checking}\label{app:proofs}

We show properties of round-trip type checking (\autoref{fig:round-trip})
relative to the purely bottom-up liquid type inference (\autoref{fig:bottom-up}),
using a bidirectional type checking system (\autoref{fig:bidir}) as an intermediate step.

\subsection{Soundness of round-trip type checking}

\addtocounter{theorem}{-2}

\begin{theorem}[Soundness of Round-Trip Type Checking]\label{thm:soundness}
If $\entailsQ \env$, $\env\entailsQ S$, and $\env \entailsQ t \consume S$, then $\env\entailsQ t :: S'$ and $\env \vdash S' \Subt S$.
\end{theorem}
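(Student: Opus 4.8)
The plan is to exhibit, for the given round-trip derivation, a mirroring bottom-up derivation, factoring the argument through the intermediate bidirectional system of \autoref{fig:bidir} as the organizing principle: first show that every round-trip derivation induces a bidirectional one, and then that bidirectional typing is sound with respect to the declarative bottom-up system of \autoref{fig:bottom-up}, with the required $\Subt$ accumulating from the subsumptions at the checking/inference boundaries. Concretely, I would state two mutually dependent lemmas and prove them by simultaneous induction on the round-trip derivation: a \emph{checking lemma} for $\env\entailsQ t\consume S$, yielding a bottom-up $S'$ with $\env\vdash S'\Subt S$; and a \emph{strengthening lemma} for $\env\entailsQ e\consume T\produce\hat{T'}$, yielding a bottom-up type that, once the context of $\hat{T'}$ is resolved by substitution, coincides with—or refines—$\hat{T'}$, while the derivation's subtyping premises certify $\hat{T'}\Subt T$. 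The first factoring step is immediate: forgetting the downward goal and the extra subtyping and consistency premises that round-trip imposes on partial applications leaves a bidirectional derivation producing the very same type, so $\hat{T'}$ is exactly the bidirectionally inferred type.

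The induction then proceeds rule by rule. The E-term strengthening rules are the most direct: \textsc{VarSc} generates the singleton $\{B\mid\nu=x\}$, which the bottom-up system infers verbatim, and $\{B\mid\nu=x\}\Subt\env(x)\Subt T$ gives the needed subtyping by transitivity; \textsc{Var$\forall$} picks liquid instantiations $T_i$, and since soundness only requires the \emph{existence} of a bottom-up derivation I would reuse exactly these instantiations, available because the liquidity hypotheses let me take $\quals'=\quals$. The application rules \textsc{AppFO} and \textsc{AppHO} are where the contextual type $\contT{C}{T}$ must be reconciled with the plain substitution $[e_2/x]T_2$ performed by the bottom-up application rule; here I would show that collapsing the context $C$, which binds $x$ to the precise argument type, reproduces the declarative result type, and that the round-trip premise $\env;C_1;C_2;x\colon T_x\vdash T'\Subt T$ discharges the conclusion's $\Subt$. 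For the I-term checking rules \textsc{Abs}, \textsc{If}, \textsc{Match}, \textsc{TAbs}, and \textsc{Fix}, the decomposition of the goal into subgoals matches the corresponding bottom-up rule, and the induction hypotheses compose via \textsc{$\Subt$-Fun}, \textsc{$\Subt$-Sc}, and transitivity to assemble $S'\Subt S$.

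The invariant that must be threaded through the whole induction is liquidity. The bottom-up system requires the types it \emph{synthesizes} for branches and abstractions to be liquid, whereas round-trip obtains them freely by decomposing the goal; this is precisely the gap the hypotheses $\entailsQ\env$ and $\env\entailsQ S$ are there to close. I would therefore carry these side conditions through the induction and prove a preservation sublemma: the operations used to build subgoals—substitution $[y/x]T$, adjoining the path conditions $[\top/\nu]\psi$ and $[\bot/\nu]\psi$ in \textsc{If}, substituting constructor refinements in \textsc{Match}, and the liquid instantiations of \textsc{Var$\forall$}—all send liquid types and liquid environments to liquid ones. Establishing this is what licenses taking $\quals'=\quals$ rather than a strictly larger qualifier set, and hence what lets the conclusion be stated with the same $\quals$ as the premise.

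I expect the main obstacle to be exactly the interaction between the contextual-type machinery of \textsc{AppFO} and this liquidity bookkeeping. Relating $\contT{C}{T}$ to the bottom-up system's direct substitution requires a careful context-collapse lemma, including the case where $C$ binds variables with imprecise or $\bot$-refined types that arise from the over-approximate goals fed to the left-hand side of an application, and one must check that the collapse interacts correctly with $\llbracket\env\rrbracket_\psi$ and preserves liquidity. The higher-order case \textsc{AppHO}, where $f$ cannot appear in the result type, is comparatively easy and serves as a sanity check that the context need only record scalar arguments. Everything else is routine rule-by-rule transfer once context collapse and liquidity preservation are in place.
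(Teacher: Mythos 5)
Your proposal follows the paper's own proof essentially verbatim: factor the argument through the intermediate bidirectional system, use a strengthening lemma showing $\env;C\vdash T'\Subt T$ to recover the boundary subtyping checks that the bidirectional \textsc{IE} and \textsc{AppFO} rules demand, and thread the liquidity hypotheses $\entailsQ\env$ and $\env\entailsQ S$ through the bidirectional-to-bottom-up step to discharge the $\env\entailsQ T$ premises of the inference rules. The one obstacle you anticipate---reconciling contextual types with a substitution-based bottom-up application rule---does not arise, since the paper's bottom-up system also assigns contextual types to applications, so no context-collapse lemma is needed.
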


We prove the theorem in two steps.
In step 1, assuming $\env \entailsQ t \consume S$,
we build a derivation of $\env \entailsQ t \consumeB S$
in the bidirectional type system given in \autoref{fig:bidir}.
The main difference between the two derivations is that 
the bidirectional system performs a subtyping check on the boundary of the two directions,
while the round-trip system performs subtyping checks more locally.
To show that the latter subsumes the former,
we prove that whenever the round-trip system concludes $\env \entailsQ t \consume T \produce \contT{C}{T'}$, 
then $\env;C\vdash T'\Subt T$.
In step 2, we build a derivation of $\env\entailsQ t :: S'$ from $\env \entailsQ t \consumeB S$.

\begin{lemma}[Type Strengthening]\label{lemma:strengthen}
If $\env \entailsQ t \consume T \produce \contT{C}{T'}$ then $\env;C\vdash T'\Subt T$.
\end{lemma}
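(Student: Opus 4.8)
The plan is to proceed by structural induction on the derivation of $\env \entailsQ t \consume T \produce \contT{C}{T'}$, with a case analysis on the strengthening rule applied at the root. Inspecting the four strengthening rules of \autoref{fig:round-trip}, two of the cases are immediate. In \textsc{Var$\forall$} the produced type is $[T_i/\alpha]T'$ with an empty context, and the rule's third premise is exactly $\env\vdash[T_i/\alpha]T'\Subt T$, which is the desired conclusion. In \textsc{AppFO} the produced contextual type is $\contT{C_1;C_2;x\colon T_x}{T'}$, and the rule's last premise $\env;C_1;C_2;x\colon T_x\vdash T'\Subt T$ is literally $\env;C\vdash T'\Subt T$. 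Neither of these cases requires the induction hypothesis.

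For \textsc{VarSc}, the produced type is the singleton $\{B\mid\nu=x\}$ with empty context, so I must establish $\env\vdash\{B\mid\nu=x\}\Subt T$. First I would show $\env\vdash\{B\mid\nu=x\}\Subt\{B\mid\psi\}$: since $\env(x)=\{B\mid\psi\}$, the environment encoding $\llbracket\env\rrbracket$ contributes $[x/\nu]\psi$ whenever $x$ is mentioned, and $x$ is mentioned by the refinement $\nu=x$; hence $\nu=x$ together with $[x/\nu]\psi$ entails $\psi$, discharging the \textsc{$\Subt$-Sc} implication. Composing this with the premise $\env\vdash\{B\mid\psi\}\Subt T$ by transitivity of subtyping yields the result. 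This step relies on an auxiliary admissibility lemma stating that $\Subt$ is transitive, which for scalars reduces to transitivity of implication under a fixed environment encoding and is otherwise routine.

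The only case that genuinely uses the induction hypothesis---and the one I expect to be the main obstacle---is \textsc{AppHO}. Here the produced type is $\contT{C}{T'}$, and applying the induction hypothesis to the first premise $\env \entailsQ e \consume \tbot\to T \produce \contT{C}{(T'_x\to T')}$ gives $\env;C\vdash (T'_x\to T')\Subt(\tbot\to T)$. I would then invert this judgment with \textsc{$\Subt$-Fun}. Because the head of a higher-order application has a non-scalar argument type (rule \textsc{WF-HO} forbids the codomain from mentioning a non-scalar formal), the function type $T'_x\to T'$ is non-dependent, so the inversion produces $\env;C\vdash\tbot\Subt T'_x$ together with $\env;C;y\colon\tbot\vdash T'\Subt T$. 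Finally, since $\tbot$ is non-scalar it is ignored by $\llbracket\cdot\rrbracket$, so the binding $y\colon\tbot$ is logically inert and can be dropped, leaving $\env;C\vdash T'\Subt T$ as required. The delicate points here are the inversion of \textsc{$\Subt$-Fun} in the presence of the placeholder $\tbot$ and the justification that non-scalar bindings do not affect subtyping; I would factor both out as small auxiliary lemmas---irrelevance of non-scalar environment bindings, and invertibility of the function subtyping rule---to keep the induction clean.
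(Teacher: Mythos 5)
Your proof is correct and follows essentially the same route as the paper's: induction on the structure of the derivation, with \textsc{Var$\forall$} and \textsc{AppFO} immediate from their subtyping premises and \textsc{AppHO} handled by the induction hypothesis followed by inversion of \textsc{$\Subt$-Fun}. Your treatment of \textsc{VarSc} is in fact more careful than the paper's, which dismisses that case as trivial even though the produced singleton type $\{B\mid\nu = x\}$ differs from the $\{B\mid\psi\}$ appearing in the rule's subtyping premise and so genuinely requires the environment-encoding and transitivity argument you supply.
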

\begin{proof}
By induction on the structure of the derivation. 
Cases \textsc{VarSc}, \textsc{Var$\forall$}, and \textsc{AppFO} are trivial,
since the subtyping check is one of the premises.
For \textsc{AppHO}, from the first premise by induction hypothesis we have
$\env;C\vdash(T_x'\to T')\Subt (\tbot\to T)$;
by definition of function subtyping (\textsc{$\Subt$-Fun} in \autoref{fig:wf-subt})
we get $\env;C\vdash T'\Subt T$.
\end{proof}

\begin{figure}
\small
\textbf{Type Inference (Contextual Types)}\quad$\boxed{\env \entailsQ e :: \hat{T}}$
\begin{gather*}
\inference[\textsc{VarSc}]
{\env(x) =\{B \mid \psi\}}
{\env \entailsQ x :: \{B\mid\nu = x\} }
\\
\inference[\textsc{Var$\forall$}]
{\env(x)=\forall\alpha_i.T'\ &  \env \entailsQ T_i}
{\env \entailsQ x :: [T_i/\alpha]T'}
\\
\inference[\textsc{App}]
{\env \entailsQ e :: \contT{C_1}{(\funT{x}{T_x}{T})}    \\  
\env;C_1 \entailsQ t :: \contT{C_2}{T_x'} \\
\env;C_1;C_2 \vdash T_x' \Subt T_x}
{\env \entailsQ \App{e}{t} :: \contT{C_1;C_2;x\colon T_x'}{T} }
\end{gather*}

\textbf{Type Inference (Context-Free Types)}\quad$\boxed{\env \entailsQ t :: S}$
\begin{gather*}
\inference[\textsc{Subt}]
{\env\entailsQ e :: \contT{C}{T'} & \env;C \vdash T' \Subt T}
{\env \entailsQ e :: T}
\\
\inference[\textsc{Abs}]
{\env\entailsQ(\funT{x}{T_x}{T}) &  \env;x\colon T_x \entailsQ t :: T}
{\env \entailsQ \lambda x.t :: (\funT{x}{T_x}{T})}
\\
\inference[\textsc{If}]
{\env \entailsQ e :: \contT{C}{\{\T{Bool}\mid \psi\}} & \env\entailsQ T \\
\env;C;[\top/\nu]\psi \entailsQ t_1 :: T  &  
\env;C;[\bot/\nu]\psi \entailsQ t_2 :: T}
{\env \entailsQ\T{if}\ e\ \T{then}\ t_1\ \T{else}\ t_2 :: T}
\\
\inference[\textsc{Match}]
{\env \entailsQ e :: \contT{C}{\{D\ T_k\mid\psi\}}  \\
\T{C}_i = T_i^j \to \{D\ T_k\mid\psi'_i\}    &
\env_i = \{x_i^j\colon T_i^j\};[x'/\nu]\psi'_i \\
\env\entailsQ T & \env;C;[x'/\nu]\psi;\env_i \entailsQ t_i :: T}
{\env \entailsQ \T{match}\ e\ \T{with}\ |_i\ \T{C}_i \langle x_i^j\rangle \mapsto t_i :: T}
\\
\inference[\textsc{TAbs}]
{\env \entailsQ t :: T  &  \alpha_i\ \text{not free in}\ \env}
{\env \entailsQ t :: \forall\alpha_i.T}
\\
\inference[\textsc{Fix}]
{\env\entailsQ S & \env;x\colon S \entailsQ t :: S}
{\env \entailsQ \T{fix}\ x.t :: S}
\end{gather*}
\caption{Rules of liquid type inference.}\label{fig:bottom-up}
\end{figure}

\begin{figure}
\small
\textbf{Type Inference}\quad$\boxed{\env \entailsQ e \produceB \hat{T}}$
\begin{gather*}
\inference[\textsc{VarSc}]
{\env(x) =\{B \mid \psi\}}
{\env \entailsQ x \produceB \{B\mid\nu = x\} }
\\
\inference[\textsc{Var$\forall$}]
{\env(x)=\forall\alpha_i.T'\ &  \env \entailsQ T_i}
{\env \entailsQ x \produceB [T_i/\alpha]T'}
\\
\inference[\textsc{AppFO}]
{\env \entailsQ e_1 \produceB \contT{C_1}{(\funT{x}{\{B \mid \psi\}}{T})}    \\  
\env;C_1 \entailsQ e_2\produceB \contT{C_2}{T_x} \\
\env;C_1;C_2 \vdash T_x \Subt \{B \mid \psi\}}
{\env \entailsQ \App{e_1}{e_2} \produceB \contT{C_1;C_2;x\colon T_x}{T} }
\\
\inference[\textsc{AppHO}]
{\env \entailsQ e \produceB \contT{C}{(T_x \to T)}   \\  
\env;C \entailsQ f \consumeB T_x}
{\env \entailsQ \App{e}{f} \produceB \contT{C}{T}}
\end{gather*}

\textbf{Type Checking}\quad$\boxed{\env \entailsQ t \consumeB S}$
\begin{gather*}
\inference[\textsc{IE}]
{\env \entailsQ e \produceB \contT{C}{T'} & \env;C \vdash T' \Subt T}
{\env \entailsQ e \consumeB T}
\\
\inference[\textsc{Abs}]
{\env;y\colon T_x \entailsQ t \consumeB [y/x]T}
{\env \entailsQ \lambda y.t \consumeB (\funT{x}{T_x}{T})}
\\
\inference[\textsc{If}]
{\env \entailsQ e \produceB \contT{C}{\{\T{Bool}\mid \psi\}} \\
\env;C;[\top/\nu]\psi \entailsQ t_1 \consumeB T  &  
\env;C;[\bot/\nu]\psi \entailsQ t_2 \consumeB T}
{\env \entailsQ\T{if}\ e\ \T{then}\ t_1\ \T{else}\ t_2 \consumeB T}
\\
\inference[\textsc{Match}]
{\env \entailsQ e \produceB \contT{C}{\{D\ T_k\mid\psi\}}  \\
\T{C}_i = T_i^j \to \{D\ T_k\mid\psi'_i\}    &
\env_i = \{x_i^j\colon T_i^j\};[x'/\nu]\psi'_i \\
\env;C;[x'/\nu]\psi;\env_i \entailsQ t_i \consumeB T}
{\env \entailsQ \T{match}\ e\ \T{with}\ |_i\ \T{C}_i \langle x_i^j\rangle \mapsto t_i \consumeB T}
\\
\inference[\textsc{TAbs}]
{\env \entailsQ t \consumeB T  &  \alpha_i\ \text{not free in}\ \env}
{\env \entailsQ t \consumeB \forall\alpha_i.T}
\\
\inference[\textsc{Fix}]
{\env;x\colon S^\before \entailsQ t \consumeB S}
{\env \entailsQ \T{fix}\ x.t \consumeB S}
\end{gather*}
\caption{Rules of bidirectional type checking.}\label{fig:bidir}
\end{figure}

\begin{lemma}[Round-trip to Bidirectional]\label{lemma:rt-to-bidir}
If $\env \entailsQ t \consume S$ then $\env \entailsQ t \consumeB S$;
additionally, if $t$ is an E-term and $\env \entailsQ t \consume T \produce \hat{T'}$,
then $\env \entailsQ t \produceB \hat{T'}$.
\end{lemma}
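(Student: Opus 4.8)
The plan is to establish both claims at once by a single induction on the structure of the round-trip derivation, carrying the checking statement ($\env \entailsQ t \consume S$ yields $\env \entailsQ t \consumeB S$) and the strengthening statement (for an E-term $e$, $\env \entailsQ e \consume T \produce \hat{T'}$ yields $\env \entailsQ e \produceB \hat{T'}$) as one mutually recursive induction hypothesis. The two are inseparable because the checking rules \textsc{If}, \textsc{Match}, and \textsc{IE} feed their E-term subcomponents into strengthening judgments, while the strengthening rule \textsc{AppHO} feeds its higher-order argument into a checking judgment. The guiding observation is that the bidirectional system of \autoref{fig:bidir} arises from the round-trip system of \autoref{fig:round-trip} by \emph{erasing} the local subtyping premises from the strengthening rules and \emph{reinstating} a single subtyping check at the \textsc{IE} boundary; the contextual types produced on the one side and inferred on the other therefore coincide rule-for-rule, so the entire argument reduces to showing that the erased premises are never needed downstream and that the one reinstated premise is always recoverable.

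For most cases the conversion is mechanical. In \textsc{VarSc} and \textsc{Var$\forall$} the bidirectional rules are obtained by dropping the premise $\env \vdash \cdot \Subt T$ while producing the identical type ($\{B \mid \nu = x\}$, respectively $[T_i/\alpha]T'$ with the same liquid witnesses $T_i$), so I would simply apply the matching bidirectional rule. For \textsc{AppHO} I would convert the functional premise by the strengthening half of the hypothesis to get $\env \entailsQ e \produceB \contT{C}{(T'_x \to T')}$, convert the argument $f$ by the checking half, and close with bidirectional \textsc{AppHO}; here the return types align directly because $T'$ cannot depend on the argument. The rules \textsc{Abs}, \textsc{TAbs}, and \textsc{Fix} are pure structural recursion, converting each premise by the hypothesis and reassembling with the corresponding bidirectional rule (\textsc{Fix} retains the same $S^\before$, so termination weakening is untouched), and \textsc{If} and \textsc{Match} convert the scrutinee by the strengthening half and each branch by the checking half.

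The substance of the lemma lives in \textsc{AppFO} and \textsc{IE}, and both turn on the Type Strengthening lemma (Lemma~\ref{lemma:strengthen}). Bidirectional \textsc{AppFO} demands the argument-side check $\env;C_1;C_2 \vdash T_x \Subt \{B \mid \psi\}$, which is not among the round-trip premises; I would recover it by applying Lemma~\ref{lemma:strengthen} to the strengthening premise $\env;C_1 \entailsQ e_2 \consume \{B \mid \psi\} \produce \contT{C_2}{T_x}$, then feed the two converted subderivations into bidirectional \textsc{AppFO}, which infers the same return type $T'$ carried up from $e_1$. Dually, the checking rule \textsc{IE} discards the inferred type on the round-trip side, whereas bidirectional \textsc{IE} requires $\env;C \vdash T' \Subt T$; here I would obtain the bidirectional inference $\env \entailsQ e \produceB \contT{C}{T'}$ from the strengthening half of the hypothesis and supply the missing boundary check again from Lemma~\ref{lemma:strengthen}. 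The main obstacle is thus concentrated in one point: verifying that Lemma~\ref{lemma:strengthen} delivers the boundary subtyping obligation in exactly the environment the bidirectional rules name---that the contexts $C_1, C_2$ accumulated during strengthening coincide with those under which the bidirectional subtyping checks are stated. This is precisely why the Type Strengthening lemma is proved first and why its conclusion threads the context $C$ through the judgment.
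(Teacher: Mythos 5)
Your proof is correct and follows essentially the same route as the paper's: a mutual induction over the round-trip derivation in which all cases but two are mechanical, and the missing bidirectional subtyping premises in \textsc{AppFO} and \textsc{IE} are recovered by applying the Type Strengthening lemma to the corresponding round-trip strengthening premises. The extra detail you give about contexts and the erasure view is consistent with, and slightly more explicit than, the paper's argument.
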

Note that there is a one-to-one correspondence between the rules of the two systems (given by their names);
our construction always uses corresponding rules.
\begin{proof}
By induction on the structure of the derivation.
Let us first assume an E-term $e$ and a derivation of $\env \entailsQ e \consume T \produce \hat{T'}$;
we will build a derivation of $\env \entailsQ e \produceB \hat{T'}$.
Cases \textsc{VarSc} and \textsc{Var$\forall$} are trivial
since the bidirectional premises for these rules are a subset of the round-trip premises.
Rule \textsc{AppHO} has the same structure in the two systems,
thus we invoke the induction hypothesis in a straightforward manner.
Finally, in the rule \textsc{AppFO}, we obtain the first two premises from the induction hypothesis;
the third premise, $\env;C_1;C_2 \vdash T_x \Subt \{B \mid \psi\}$,
follows from the round-trip premise $\env;C_1 \entailsQ e_2 \consume \{B \mid \psi\} \produce \contT{C_2}{T_x}$
and Lemma~\ref{lemma:strengthen}.

We are left to deal with the checking rules.
For all checking rules except \textsc{IE} we can obtain the required bidirectional premises
in a straightforward manner from the induction hypothesis.
For \textsc{IE}, we obtain the first premise from the induction hypothesis;
the second premise, $\env;C \vdash T' \Subt T$,
follows from the round-trip premise $\env \entailsQ e \consume T \produce \contT{C}{T'}$
and Lemma~\ref{lemma:strengthen}.
\end{proof}

\begin{lemma}[Bidirectional to Bottom-up]\label{lemma:bidir-to-bu}
If $\entailsQ \env$, $\env\entailsQ S$, and $\env \entailsQ t \consumeB S$, then $\env\entailsQ t :: S'$ and $\env \vdash S' \Subt S$;
additionally if $t$ is an E-term and $\env \entailsQ t \produceB \hat{T}$ then $\env\entailsQ t :: \hat{T}$.
\end{lemma}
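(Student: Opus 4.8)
I would prove the two statements simultaneously, by induction on the structure of the bidirectional derivation (\autoref{fig:bidir}), matching each bidirectional rule with the corresponding bottom-up rule of \autoref{fig:bottom-up} (the correspondence is given by rule names, with both \textsc{AppFO} and \textsc{AppHO} mapping to the single bottom-up \textsc{App}, and \textsc{IE} mapping to \textsc{Subt}). The two statements are genuinely mutually recursive: the checking statement feeds into the E-term statement through \textsc{IE}, while the E-term statement feeds into the checking statement through \textsc{AppHO}, whose higher-order argument $f$ is an I-term. The organizing observation I would rely on is that, although the conclusion only asserts $\env\vdash S'\Subt S$, for the checking statement one in fact derives the goal type \emph{exactly}, so $S'=S$. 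This exactness is not optional: the bottom-up I-term rules \textsc{Abs}, \textsc{If}, \textsc{Match}, \textsc{TAbs}, \textsc{Fix} each demand that their sub-derivations produce the goal type verbatim (e.g.\ both branches of an \textsc{If} must share the \emph{same} $T$), and the bottom-up system provides no subsumption for I-terms. All subtyping slack is therefore absorbed at E-term leaves by the \textsc{Subt} rule, which is exactly what \textsc{IE} encodes.

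\textbf{E-term inference (statement B).} Cases \textsc{VarSc} and \textsc{Var$\forall$} are immediate, since the bottom-up premises coincide with the bidirectional ones (the liquidity side-condition $\env\entailsQ T_i$ transfers unchanged). For \textsc{AppFO} I note that the bottom-up \textsc{App} rule is strictly more general: I apply the induction hypothesis (statement B) to both $e_1$ and $e_2$, and the domain check $\env;C_1;C_2\vdash T_x\Subt\{B\mid\psi\}$ of \textsc{AppFO} is precisely the contravariant premise of \textsc{App}, with the inferred contextual type matching on the nose. For \textsc{AppHO} the argument $f$ is a function term checked against $T_x$, so I invoke statement A; the resulting (possibly strict) subtype $T_x'\Subt T_x$ is harmlessly accepted by the contravariant domain premise of \textsc{App}, and since the codomain cannot mention the higher-order argument the result type is unchanged.

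\textbf{Checking (statement A).} For \textsc{IE} I apply statement B to the E-term and then one instance of bottom-up \textsc{Subt} to reach the goal $T$ exactly. For \textsc{Abs}, \textsc{If}, \textsc{Match}, and \textsc{TAbs} I apply the induction hypothesis to each subterm against the same goal type and invoke the matching bottom-up rule; the crucial extra ingredient is that each of these bottom-up rules carries a well-formedness/liquidity premise ($\env\entailsQ T$, resp.\ $\env\entailsQ\funT{x}{T_x}{T}$) absent from the bidirectional system. This is exactly where the hypotheses $\env\entailsQ S$ and $\entailsQ\env$ are spent: every goal type handed to an I-term rule is a substructure of the liquid goal $S$, and every argument type produced by an E-term is liquid because components are instantiated with liquid types in a liquid environment. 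I would record this as a liquidity invariant and verify that extending $\env$ with path conditions (in \textsc{If}) and constructor bindings (in \textsc{Match}) preserves both well-formedness and liquidity, so the invariant is maintained through the recursion.

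\textbf{Main obstacle: the \textsc{Fix} case.} Here the two systems genuinely disagree on the environment: the bidirectional premise binds the recursive occurrence at the termination-weakened type, $\env;x\colon S^{\before}\entailsQ t\consumeB S$, whereas bottom-up \textsc{Fix} binds it at the plain type $S$. I would first observe that $S\Subt S^{\before}$: by construction $S^{\before}$ restricts the domain of $S$ to strictly smaller arguments, so its domain is a subtype of that of $S$, and by contravariance of \textsc{$\Subt$-Fun} the whole arrow $S$ is a subtype of $S^{\before}$. Strengthening the assumption $x\colon S^{\before}$ to the subtype $x\colon S$ should preserve derivability, so I would establish an environment-\emph{narrowing} lemma and use it to convert the premise into $\env;x\colon S\entailsQ t\consumeB S$, after which the induction hypothesis applies in the now-liquid, well-formed environment $\env;x\colon S$ and bottom-up \textsc{Fix} discharges the goal. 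I expect this narrowing lemma to be the delicate part: it rests on monotonicity of the entire judgment under replacing a binding by a subtype, which ultimately reduces to showing that a stronger binding yields a logically stronger environment encoding $\llbracket\env\rrbracket_\psi$—so that every implication side-condition of \textsc{$\Subt$-Sc} remains valid—together with a check that the weakened recursive binding stays compatible with the liquidity bookkeeping above.
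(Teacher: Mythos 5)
Your proposal is correct and follows essentially the same route as the paper's proof: a mutual induction over the bidirectional derivation with a rule-by-rule correspondence, the liquidity hypotheses $\entailsQ\env$ and $\env\entailsQ S$ spent exactly on the well-formedness/liquidity premises of the bottom-up I-term rules and on the argument type in \textsc{AppHO}, and the \textsc{Fix} case resolved via $\env\vdash S\Subt S^\before$. The only difference is one of explicitness: the paper simply asserts that \textsc{Fix} ``relies on'' $S\Subt S^\before$, whereas you correctly identify that discharging it requires an environment-narrowing step to move the premise from $\env;x\colon S^\before$ to $\env;x\colon S$ — a detail the paper glosses over but that your treatment makes precise.
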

\begin{proof}
By induction on the structure of the derivation.
Let us first assume an E-term $e$ and a derivation of $\env \entailsQ e \produceB \hat{T}$;
we will build a derivation of $\env \entailsQ e :: \hat{T}$.
Cases \textsc{VarSc}, \textsc{Var$\forall$}, and \textsc{AppFO} are trivial
(in the latter case, \textsc{AppFO} is replaced by \textsc{App}).
For the \textsc{AppHO} case, we also construct a derivation using \textsc{App} as the root;
from the second premise of \textsc{AppHO}, $\env;C \entailsQ f \consumeB T_x$, 
by induction hypothesis we obtain $T_x'$ such that $\env;C \entailsQ f :: T_x'$ and $\env;C \vdash T_x' \Subt T_x$.
Thus, we obtain the two missing premises of \textsc{App} (with $C_2$ empty).
Note that to apply the induction hypothesis, we need to show that the argument type $T_x$ is liquid ($\env\entailsQ T_x$);
this follows from the assumption $\entailsQ \env$ (since $T_x$ is a precondition of a function from $\env$).

For rule \textsc{IE} it is straightforward to obtain the inferred type $T'$ from the induction hypothesis.
For rule \textsc{Abs}, from the derivation of $\env \entailsQ \lambda x.t \consumeB (\funT{x}{T_x}{T})$,
we construct the derivation of $\env \entailsQ \lambda x.t :: (\funT{x}{T_x}{T})$ (\ie we pick the bidirectional goal type as the inferred type);
this is possible since by the assumptions of the lemma the type is liquid: $\env\entailsQ \funT{x}{T_x}{T}$,
which satisfies the first premise of the bottom-up rule.
To satisfy the second premise,
from the induction hypothesis we infer $\env;x:T_x \entailsQ t :: T'$, where $\env;x:T_x\vdash T'\Subt T$,
and then using the \textsc{Subt} rule we get $\env;x:T_x \entailsQ t :: T$.
To apply the induction hypothesis, we need to show that $T$ is liquid,
which follows from the assumption that $\funT{x}{T_x}{T}$ is liquid and the definition of liquid function types.
The cases \textsc{If}, \textsc{Match}, \textsc{TAbs} are analogous to \textsc{Abs}.
For \textsc{Fix} we additionally rely on the fact that $\env\vdash S\Subt S^\before$;
this follows from the definition of termination weakening, 
which only adds refinements to preconditions inside $S$.
\end{proof}

\begin{proof}[Proof of Theorem \ref{thm:soundness}]
Straightforward by combining Lemma~\ref{lemma:rt-to-bidir} and Lemma~\ref{lemma:bidir-to-bu}.
\end{proof}

\subsection{Completeness of round-trip type checking}

As noted in \autoref{sec:theory:proofs},
we show completeness for a termination-oblivious variant of the round-trip system,
obtained from \autoref{fig:round-trip} by replacing $S^{\before}$ in the premise of the \textsc{Fix} rule by $S$.
We denote the checking judgment of the modified system as $\env \entailsQ^* t \consume S$.

\begin{theorem}[Completeness of round-trip type checking]\label{thm:completeness}
If $\env\entailsQ t :: S$, then $\env \entailsQ^* t \consume S$.
\end{theorem}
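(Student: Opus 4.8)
The plan is to establish completeness by reversing the two-step reduction used for soundness: first from the bottom-up inference system (\autoref{fig:bottom-up}) to the bidirectional system (\autoref{fig:bidir}), and then from the bidirectional system to the round-trip system (\autoref{fig:round-trip}). Throughout I would work with the termination-oblivious variants on both intermediate systems, so that the two \textsc{Fix} rules line up: the bottom-up \textsc{Fix} already types the recursive occurrence with the unweakened schema $S$, and the starred round-trip rule does the same, so no termination reasoning intervenes. The theorem then follows by composing the two reductions.

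First I would prove a \emph{bottom-up to bidirectional} lemma in subsumption form: if $\env\entailsQ t :: S'$ and $\env\vdash S'\Subt S$, then $t$ checks against $S$ in the termination-oblivious bidirectional system, and for an E-term $e$ the contextual inference judgments coincide, i.e. $\env\entailsQ e :: \contT{C}{T'}$ yields $\env\entailsQ e\produceB\contT{C}{T'}$. The E-term cases are immediate, since the two inference judgments share \textsc{VarSc}/\textsc{Var$\forall$}/\textsc{App} (the latter split into \textsc{AppFO} and \textsc{AppHO} according to whether the argument is an E-term or a function term), and the argument-side subtyping premise carries over verbatim. The I-term cases (\textsc{Abs}, \textsc{If}, \textsc{Match}, \textsc{TAbs}, \textsc{Fix}) go by pushing the subsumption into the subterms via the function-subtyping rule \textsc{$\Subt$-Fun}; this needs an auxiliary narrowing lemma—strengthening a variable's binding to a subtype preserves typing—which holds routinely because strengthening a binding only enlarges the assumptions extracted by $\llbracket\env\rrbracket$, and validity is monotone in those assumptions.

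Second I would prove a \emph{bidirectional to round-trip} lemma: bidirectional checking implies starred round-trip checking, and for an E-term $e$, if $\env\entailsQ e\produceB\contT{C}{T'}$ and $\env;C\vdash T'\Subt T$ then $\env\entailsQ^* e\consume T\produce\contT{C}{T'}$. The I-term checking rules are syntactically identical across the two systems, so those cases are direct applications of the induction hypothesis. The content is in the application rules, where round-trip adds local subtyping premises and feeds the left-hand side an over-approximate goal. The device is to choose, for the function position of an application with external goal $T$, the goal $\funT{x}{\{B\mid\bot\}}{T}$ (and $\tbot\to T$ in the higher-order case), whose argument position is vacuously refined; the contravariant domain checks then hold trivially, and the crucial codomain premise $\env;C_1;C_2;x\colon T_x\vdash T'\Subt T$ is discharged using the given subtyping together with the design of $\llbracket\env\rrbracket$—an inconsistent binding $x\colon\{B\mid\bot\}$ contributes to the extracted assumptions only when $x$ is actually mentioned, so the check reduces either to a vacuous implication (when $x$ occurs in $T'$) or to the already-available subtyping with the irrelevant bindings erased (when it does not).

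The main obstacle I expect is exactly this reconstruction of the extra local subtyping premises from the global information available in the bidirectional derivation. For the application rules it is handled by the $\bot/\ttop$ goal construction above; for the leaf rules (\textsc{VarSc}, \textsc{Var$\forall$}) it requires relating the precise type produced bottom-up—the singleton $\{B\mid\nu=x\}$, or the instantiation $[T_i/\alpha]T'$—to the type used in the round-trip premise, and verifying that the over-approximate check is entailed in every context in which such a variable can legitimately be checked. A secondary obstacle is the higher-order application case: the inferred argument type may be a strict subtype of the expected domain, so closing the induction requires the subsumption form of the first lemma (equivalently, a goal-weakening lemma for the checking judgment), whose \textsc{Abs} case in turn relies on the narrowing lemma. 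Once these auxiliary lemmas are in place, the theorem follows by composing the two reductions, instantiating the first with $S' = S$.
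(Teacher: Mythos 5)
Your proposal is correct and follows the same route as the paper's proof: the same two-step reduction through the intermediate bidirectional system (termination-oblivious throughout), the same choice of $\funT{x}{\{B\mid\bot\}}{T}$ (resp.\ $\tbot\to T$) as the over-approximate goal for the function position of an application, and the same case split on whether $x$ occurs in the result type, exploiting the fact that $\llbracket\env\rrbracket_{\psi}$ only collects a binding's refinement when the bound variable is actually mentioned. The one genuine divergence is in the first reduction. The paper states the bottom-up-to-bidirectional lemma without subsumption and, in the higher-order application case, coerces the inferred argument type $T_x'$ to the expected domain $T_x$ by inserting a \textsc{Subt} step into the bottom-up derivation before invoking the induction hypothesis; you instead generalize the lemma to subsumption form and discharge the I-term cases with a narrowing lemma. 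Your version costs an extra auxiliary lemma but closes the induction more cleanly: the paper's appeal to the induction hypothesis on a derivation extended with \textsc{Subt} is not literally structural induction on the original derivation, and \textsc{Subt} as stated applies to E-terms while the argument of a higher-order application is a function term, so the subsumption-form statement you propose is arguably what the paper's argument implicitly relies on anyway. On the leaf cases of the second reduction you are in fact more candid than the paper: you flag (without fully resolving) the need to relate the produced singleton type $\{B\mid\nu=x\}$ to the declared type $\{B\mid\psi\}$ appearing in the round-trip \textsc{VarSc} premise, a step the paper dismisses as trivial.
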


Like in the case of soundness, we do the proof in two steps.
We first go from a bottom-up derivation to a bidirectional derivation,
essentially by replacing all inference judgments for I-terms with checking judgments. 
Going from a bidirectional derivation to a round-trip one is more complex:
it amounts to proving that the ``early'' checks of the round-trip system%
---that is, subtyping checks on incomplete applications---%
only reject terms that would be later rejected by the bidirectional \textsc{IE} rule.

\begin{lemma}[Bottom-up to Bidirectional]\label{lemma:bu-to-bidir}
If $\env\entailsQ t :: S$, then $\env \entailsQ^* t \consumeB S$;
additionally if $t$ is an E-term and $\env\entailsQ t :: \hat{T}$, then $\env \entailsQ^* t \produceB \hat{T}$.
\end{lemma}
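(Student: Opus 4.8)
The plan is to prove both parts simultaneously by structural induction on the given bottom-up derivation, carrying the two induction hypotheses in lockstep: an E-term typed by the contextual judgment $\env \entailsQ e :: \hat{T}$ is translated into a bidirectional inference $\env \entailsQ^* e \produceB \hat{T}$ (the ``additionally'' clause), while an arbitrary term typed by the context-free judgment $\env \entailsQ t :: S$ is translated into a bidirectional check $\env \entailsQ^* t \consumeB S$. Since the rules of \autoref{fig:bottom-up} and \autoref{fig:bidir} are in near one-to-one correspondence by name, the translation amounts to applying the eponymous bidirectional rule and discharging its premises from the induction hypotheses.

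First I would dispatch the routine cases. \textsc{VarSc} and \textsc{Var$\forall$} are literally identical in the two systems. The bottom-up \textsc{Subt} rule maps to the bidirectional \textsc{IE} rule: the contextual type of the E-term comes from the contextual hypothesis and the accompanying subtyping premise $\env;C\vdash T'\Subt T$ is shared verbatim. For the I-term rules \textsc{Abs}, \textsc{If}, \textsc{Match}, and \textsc{TAbs}, the key observation is that the bottom-up system infers these terms only at \emph{liquid} types (via extra well-formedness premises such as $\env\entailsQ(\funT{x}{T_x}{T})$); in the bidirectional system I simply reuse that inferred type as the checking goal and discharge the subterm premises from the checking hypothesis, dropping the now-superfluous liquidity side-conditions. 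The \textsc{Fix} case is where the $\entailsQ^*$ variant earns its keep: bottom-up \textsc{Fix} binds the recursive occurrence of $x$ at the full schema $S$, and the starred \textsc{Fix} rule also uses $S$ rather than the termination-weakened $S^\before$, so the body premise transfers directly.

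The interesting case is the single bottom-up \textsc{App} rule, which must be routed to one of the two bidirectional rules \textsc{AppFO} and \textsc{AppHO} according to whether the formal parameter is scalar; this split aligns with the grammatical distinction between $\App{e}{e}$ and $\App{e}{f}$, and the $\eta$-long normal form guarantees that function-typed arguments always appear as function terms. When the argument is a scalar E-term I would use \textsc{AppFO}: the function's and the argument's contextual types come from the contextual hypothesis, and the argument subtyping premise $\env;C_1;C_2 \vdash T_x' \Subt T_x$ is exactly the third premise of \textsc{AppFO}. When the argument is a function term $f$ with a non-scalar formal type $T_x$ I would use \textsc{AppHO}; here the bottom-up derivation records an \emph{explicit} coercion $T_x' \Subt T_x$ on the argument, whereas \textsc{AppHO} demands the subtyping-free check $\env;C\entailsQ^* f \consumeB T_x$ (well-formedness forces $C_2$ empty and $T$ independent of $x$, so the result contexts also match).

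Bridging this last gap is the step I expect to be the main obstacle: from the checking hypothesis I only obtain $\env \entailsQ^* f \consumeB T_x'$, and I must convert this into $\env \entailsQ^* f \consumeB T_x$. I therefore plan to establish an auxiliary \emph{subsumption-for-checking} lemma: whenever $\env \entailsQ^* f \consumeB T_x'$ and $\env \vdash T_x' \Subt T_x$, then $\env \entailsQ^* f \consumeB T_x$. This lemma is itself proved by induction on the function type, unfolding both types through \textsc{Abs} and appealing to the contravariance of domains and covariance of codomains built into rule \textsc{$\Subt$-Fun} of \autoref{fig:wf-subt} (a more precise parameter type in the context only strengthens assumptions, and a weaker codomain goal is easier to check), bottoming out at scalar codomains where the relaxation composes with the subtyping already present in \textsc{IE}; a standard environment-weakening property closes the remaining side-conditions. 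Once this lemma is available, the \textsc{AppHO} case closes, completing the induction and hence the translation of every bottom-up derivation into the starred bidirectional system.
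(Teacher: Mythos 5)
Your proposal is correct in its overall architecture and matches the paper's proof almost everywhere: the same mutual structural induction, the same rule-by-rule correspondence, the same observation that the I-term rules carry liquidity premises that simply become the checking goals, and the same identification of \textsc{AppHO} as the one nontrivial case. Where you genuinely diverge is in how you discharge that case. You apply the induction hypothesis first, obtaining $\env \entailsQ^* f \consumeB T_x'$, and then close the gap to $T_x$ with a new subsumption-for-checking lemma proved by induction on the function type. The paper instead absorbs the coercion \emph{inside the bottom-up system}: it extends the subderivation $\env;C_1\entailsQ f :: T_x'$ with the bottom-up \textsc{Subt} rule to get $\env;C_1\entailsQ f :: T_x$ (noting that $C_2$ is empty because the argument's type was inferred by \textsc{Abs}), and only then invokes the induction hypothesis, which delivers $\env;C_1 \entailsQ^* f \consumeB T_x$ directly. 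The paper's move is cheaper --- no auxiliary lemma at all, at the cost of a minor induction-measure technicality since the \textsc{Subt}-extended derivation is not literally a subderivation. Your subsumption lemma is a more general and reusable fact, but it is harder than your sketch suggests: unfolding ``through \textsc{Abs}'' does not cover all function terms, and at \textsc{Fix} the schema occurs contravariantly as the type of the recursive binder ($x\colon S$ in the premise of the starred rule), so weakening the goal schema also weakens the assumption available for recursive calls and the naive induction does not go through. The paper's detour through \textsc{Subt} sidesteps this entirely (though its own proof also quietly restricts the argument of a higher-order application to an abstraction); if you keep your route, you should either impose the same restriction explicitly or strengthen the auxiliary lemma to handle the contravariant binding in \textsc{Fix}.
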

\begin{proof}
By induction on the structure of the derivation.
Let us first assume an E-term $e$ and a derivation of $\env \entailsQ e :: \hat{T}$;
we will build a derivation of $\env \entailsQ^* e \produceB \hat{T}$.
Cases \textsc{VarSc} and \textsc{Var$\forall$} are trivial.
For \textsc{App}, we have to consider two cases,
depending on whether the argument is an E-term or a function term
(remember, branching terms are disallowed on right-hand sides of applications).
In the former case, we can build a derivation with \textsc{AppFO} in a straightforward manner.
In the latter case, note that the context $C_2$ has to be empty,
because the only rule that could have inferred the type of $t$ is \textsc{Abs},
and that rule does not generate contextual types.
With that observation, we can build a derivation with \textsc{AppHO}: 
we get $\env;C_1\entailsQ t :: T_x'$ and $\env;C_1\vdash T_x'\Subt T_x$,
yielding $\env;C_1\entailsQ t :: T_x$ by \textsc{Subt},
which gives us the required second premise of \textsc{AppHO} by the induction hypothesis.

For all the context-free inference rules,
it is straightforward to construct a bidirectional derivation
using the corresponding checking rule,
since the premises of these rules are a subset of the premises of their bottom-up counterparts.
\end{proof}

\begin{lemma}[Bidirectional to Round-Trip]\label{lemma:bidir-to-rt}
If $\env \entailsQ t \consumeB S$ then $\env \entailsQ t \consume S$;
additionally if $t$ is an E-term and $\env\entailsQ t \produceB \contT{C}{T}$ 
and there exists a type $U$ such that $\env\vdash U$ and $\env;C\vdash T\Subt U$
then $\env \entailsQ t \consume U \produce \contT{C}{T}$.
\end{lemma}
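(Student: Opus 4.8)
The plan is to prove both clauses simultaneously by induction on the structure of the bidirectional derivation, using the one-to-one correspondence between the rules of the two systems (exactly as in Lemmas~\ref{lemma:rt-to-bidir} and~\ref{lemma:bidir-to-bu}, but read in the opposite direction). The extra parameter $U$ in the E-term clause is the device that lets the locally-performed subtyping checks of the round-trip system be discharged: it records an over-approximate goal that is propagated into subterms and against which the round-trip rules perform their checks. For the checking statement I would mirror each checking rule by its round-trip namesake, recursing with the checking induction hypothesis on every I-term premise; the only place an E-term premise is discharged is through the E-term clause, instantiating $U$ with the well-formed goal that the round-trip rule imposes on that E-term. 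Throughout I would thread the standing invariant that every goal appearing in a checking judgment is well-formed, since the E-term clause requires $\env\vdash U$.

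For the checking cases: in \textsc{IE} I apply the E-term clause to $e$ with $U$ equal to the goal $T$, using the bidirectional premise $\env;C\vdash T'\Subt T$ as the required hypothesis $\env;C\vdash T\Subt U$; this yields exactly the single premise of the round-trip \textsc{IE} rule. In \textsc{If} and \textsc{Match} the guard and scrutinee are strengthened in round-trip against $\T{Bool}$ and $\ttop$, which are well-formed supertypes of the inferred types ($\{B\mid\psi\}\Subt\T{Bool}$ and $\{D\ T_k\mid\psi\}\Subt\ttop$), so the E-term clause applies, and the branch premises follow from the checking induction hypothesis in the identical extended environments. The cases \textsc{Abs}, \textsc{TAbs}, and \textsc{Fix} recurse directly; for \textsc{Fix} the recursive-call binding is threaded unchanged, and the same argument applies verbatim to the termination-oblivious $*$-variant used in Theorem~\ref{thm:completeness}.

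For the E-term cases, \textsc{VarSc} and \textsc{Var$\forall$} are the base cases: the round-trip rule produces the same type as the bidirectional rule, and its subtyping premise is discharged directly from the hypothesis $\env;C\vdash T\Subt U$ (for \textsc{Var$\forall$} one takes the goal to be $U$; for \textsc{VarSc} the produced selfified type is the subject of the check). For \textsc{AppHO} I would instantiate the induction hypothesis on $e$ with $U_e=\tbot\to U$, which is well-formed from $\env\vdash U$; the premise $\env;C\vdash(T_x\to T)\Subt(\tbot\to U)$ holds by \textsc{$\Subt$-Fun}, since $\tbot$ is a subtype of every type and the higher-order codomain is non-dependent, and the hypothesis supplies the codomain check. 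The induction hypothesis reproduces the context $C$ and codomain $T$, after which the checking induction hypothesis on $f$ supplies the second round-trip premise.

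The main case, and the one I expect to be the obstacle, is \textsc{AppFO}. Here I would invoke the induction hypothesis on the function $e_1$ with the over-approximate goal $U_1=\{B\mid\bot\}\to U$ and on the argument $e_2$ with $U_2=\{B\mid\psi\}$ (the declared domain, well-formed from well-formedness of the function type), noting that the bidirectional third premise $\env;C_1;C_2\vdash T_x\Subt\{B\mid\psi\}$ is exactly the hypothesis needed for the argument. The crux is discharging $\env;C_1\vdash(\funT{x}{\{B\mid\psi\}}{T})\Subt(\{B\mid\bot\}\to U)$: by \textsc{$\Subt$-Fun} this reduces to $\{B\mid\bot\}\Subt\{B\mid\psi\}$ (immediate) and a codomain check in a context binding the argument at $\{B\mid\bot\}$, which is vacuous whenever $x$ occurs (the $\bot$ refinement makes $\llbracket\cdot\rrbracket$ inconsistent) and otherwise follows from the given $\env;C\vdash T\Subt U$ after dropping the bindings $C_2;x\colon T_x$ that neither $T$ nor $U$ mentions, justified by the $\bnd{\cdot}{\cdot}$ restriction built into $\llbracket\cdot\rrbracket$. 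The round-trip \textsc{AppFO} then reproduces the contextual type $\contT{C_1;C_2;x\colon T_x}{T}$, and its third premise is precisely the hypothesis $\env;C_1;C_2;x\colon T_x\vdash T\Subt U$. The difficulty is entirely conceptual rather than computational: whereas the bidirectional system performs a single subtyping check at the \textsc{IE} boundary, the round-trip system performs local checks on partial applications, and one must show these are never stricter than the deferred check; the $\bot$-refined over-approximate goal together with the locality of $\llbracket\cdot\rrbracket$ is exactly what makes this go through, while carrying well-formedness of each constructed $U$ alongside the induction is the remaining bookkeeping.
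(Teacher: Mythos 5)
Your proposal is correct and follows essentially the same route as the paper's proof: induction on the bidirectional derivation, threading a well-formed over-approximate goal $U$ through the E-term clause, instantiating it with $\{B\mid\bot\}\to U$ and $\{B\mid\psi\}$ in \textsc{AppFO} (and $\tbot\to U$ in \textsc{AppHO}), and discharging the crux codomain check by the same two-case split on whether $x$ occurs in $T$, using the locality of $\llbracket\cdot\rrbracket$. No gaps.
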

\begin{proof}
By induction on the structure of the derivation.
Let us first assume an E-term $e$ and a type $U$, 
such that $\env \entailsQ e \produceB \contT{C}{T}$, $\env\vdash U$, and $\env;C\vdash T\Subt U$;
we will build a derivation of $\env \entailsQ e \consume U \produce \contT{C}{T}$.
Cases \textsc{VarSc} and \textsc{Var$\forall$} are trivial.

Consider the rule \textsc{AppFO}:
to get the first premise of the round-trip version, we apply the induction hypothesis,
picking $\{B\mid \bot\}\to U$ as the goal;
it is easy to show that this type is well-formed in $\env$ (since so is $U$);
the challenging part is to show $\env;C_1\vdash (\funT{x}{\{B\mid \psi\}}{T}) \Subt (\funT{x}{\{B\mid \bot\}}{U})$.
By the definition of function subtyping,
this simplifies to $\env;C_1\vdash \{B\mid \bot\} \Subt \{B\mid \psi\}$ (trivial),
and $\env;C_1;x:\{B\mid \bot\}\vdash T\Subt U$.
We know by the assumption the lemma makes about $U$ that $\env;C_1;C_2;x:T_x\vdash T\Subt U$,
so the only thing left to show is that 
$\llbracket\env;C_1;x:\{B\mid \bot\}\rrbracket_{T\Subt U} \Implies \llbracket\env;C_1;C_2;x:T_x\rrbracket_{T\Subt U}$
(note that $C_2$ only contains bindings for variables that appear inside $T_x$ and nowhere else).
We consider two cases: if $x$ does not actually appear in $T$,
then both formulas above are equivalent to $\llbracket\env;C_1\rrbracket_{T\Subt U}$, and thus the implication holds;
otherwise, one of the conjuncts in the left-hand side is $\bot$, and thus the implication also holds.

To get the second second premise of \textsc{AppFO},
we pick $\{B\mid \psi\}$ (the precondition from the inferred type of $e_1$) as the goal.
It is easy to show that this type is well-formed in $\env;C_1$; 
and moreover $\env;C_1;C_2\vdash T_x \Subt \{B\mid \psi\}$ by the third premise of the bidirectional rule.
The third premise of \textsc{AppFO} follows directly from the lemma's assumption about $U$.

Consider the rule \textsc{AppHO}.
Obtaining its first premise is similar to the case of \textsc{AppFO},
except it is easier to show that $\env;C\vdash T_x\to T \Subt \tbot\to U$,
since these function types are not dependent and thus the above is equivalent to $\env;C\vdash T \Subt U$
(which we get directly from the lemma's assumption about $U$).
The second premise is obtained directly from the induction hypothesis.

For the rule \textsc{IE}, 
we construct the required premise $\env \entailsQ e \consume T \produce \contT{C}{T'}$
from the bidirectional derivation for $\env \entailsQ e \produceB \contT{C}{T'}$,
taking into account that $\env;C\vdash T'\Subt T$ is required by the bidirectional rule,
and thus $T$ is a valid goal type for the round-trip derivation. 

The rest of the rules are trivial, since they have the same shape in both systems;
the only two exceptions are the first premises of \textsc{If} and \textsc{Match},
which require a strengthening judgment;
in both cases it is straightforward to show that the respective goal types (\T{Bool} and $\ttop$)
are supertypes of the type inferred by the bidirectional system.
\end{proof}

\begin{proof}[Proof of Theorem \ref{thm:completeness}]
Straightforward by combining Lemma~\ref{lemma:bu-to-bidir} and Lemma~\ref{lemma:bidir-to-rt}.
\end{proof}

\end{document}